\def\BibTeX{{\rm B\kern-.05em{\sc i\kern-.025em b}\kern-.08em
    T\kern-.1667em\lower.7ex\hbox{E}\kern-.125emX}}
\newcolumntype{Y}{>{\centering\arraybackslash}X}
\newcommand{\average}[2][0]{{%
		\mspace{#1mu}%
		\overline{\mspace{-#1mu}\average@check#2\relax}%
}}
\newcommand\average@check[1]{%
	#1\@ifnextchar_{\average@sub}{}%
}
\newcommand{\average@sub}[2]{
	_{#2}\mspace{-2mu}\aftergroup\average@compensate
}
\newcommand{\average@compensate}{\mspace{2mu}}
\newcommand{\baverage}[2][0]{{%
		\mspace{#1mu}%
		\underline{\mspace{-#1mu}\baverage@check#2\relax}%
}}
\newcommand\baverage@check[1]{%
	#1\@ifnextchar_{\baverage@sub}{}%
}
\newcommand{\baverage@sub}[2]{
	_{#2}\mspace{-2mu}\aftergroup\baverage@compensate
}
\newcommand{\baverage@compensate}{\mspace{2mu}}
\newtheorem{theorem}{Theorem}
\newtheorem{lemma}{Lemma}
\newtheorem{assumption}{Assumption}
\newtheorem{remark}{Remark}
\newtheorem{proposition}{Proposition}
\newcommand{\norm}[1]{\left\lVert#1\right\rVert}
\begin{document}
	\title{A Novel Approach to Set-Membership Observer for Systems with Unknown Exogenous Inputs}
	\author{Marvin Jesse, Dawei Sun, \IEEEmembership{Student Member, IEEE}, and Inseok Hwang, \IEEEmembership{Member, IEEE}
		\thanks{This work is supported in part by NSF CNS-1836952. }
		\thanks{The authors are with School of Aeronautics and Astronautics, Purdue University, West Lafayette, IN 47907 USA (e-mail: jessem@purdue.edu, sun289@purdue.edu, ihwang@purdue.edu).}
	}
\maketitle

	\begin{abstract}
		Motivated by the increasing need to monitor safety-critical systems subject to uncertainties, a novel set-membership approach is proposed to estimate the state of a dynamical system with unknown-but-bounded exogenous inputs. 
		The proposed method decomposes the system into the strongly observable and weakly unobservable subsystem in which an unknown input observer and an ellipsoidal set-membership observer are designed for each subsystem, respectively.
		The conditions for the boundedness of the proposed set estimate are discussed, and the proposed set-membership observer is also tested numerically using illustrative examples.
	\end{abstract}
	
	\begin{IEEEkeywords}
		Estimation, linear system observers, linear systems, set-membership observer.
	\end{IEEEkeywords}
	
	\section{Introduction}
	\label{sec:introduction}
	
State estimation has been widely used in the control community in areas like the secure control of cyber-physical systems \cite{c34} and fault diagnosis \cite{c35}. One of the common methods of state estimation used is the deterministic method, which treats the noise and disturbance as unknown-but-bounded (UBB) \cite{c38}.

\textit{Related works: }One of the most well-established methods to estimate the system's state under UBB uncertainty is the set-membership observer. This method uses geometrical sets, such as ellipsoids \cite{c21}, zonotopes \cite{c3}, or parallelotopes \cite{c52}, to enclose all admissible state values. 
An alternative approach is the interval observer. This method works by evaluating the error dynamics generated by the upper and lower bounds of the estimated states so that the error dynamics are cooperative and stable \cite{c5}. The unknown input observer \cite{c14} is a third viable option that can accurately estimate the system state without much prior knowledge of the inputs, rendering it more amenable for use in fault detection schemes. The design of such an observer does require the system to have strong observability, which limits its applicability. Strong observability \cite{c31} is defined as a system's property in which we can infer the true state from the system's output for any initial state and unknown input.

Note that each class of approaches has its own advantages and disadvantages. In particular, the interval observer offers low computational complexity but more conservative results than the set-membership observer, and vice versa. In order to achieve a balance between these two specific aspects, the authors in \cite{c11} combined both the interval observer and set-membership observer. Additionally, the idea to decompose the system into strongly observable and weakly unobservable subsystems has been attempted in \cite{c53}. In \cite{c53}, the authors apply a High Order Sliding Mode technique and interval observer in order to improve the estimation accuracy. 
Motivated by these works, we consider a more effective way to use the structure of the system such that our proposed set-membership observer has a comparable estimation performance while not requiring the system to be too restrictive. 

	\begin{figure*}[b]
	\centering
	\includegraphics[clip,trim=0.001cm 12.8cm 0.5cm 8.2cm, width=0.98\textwidth]{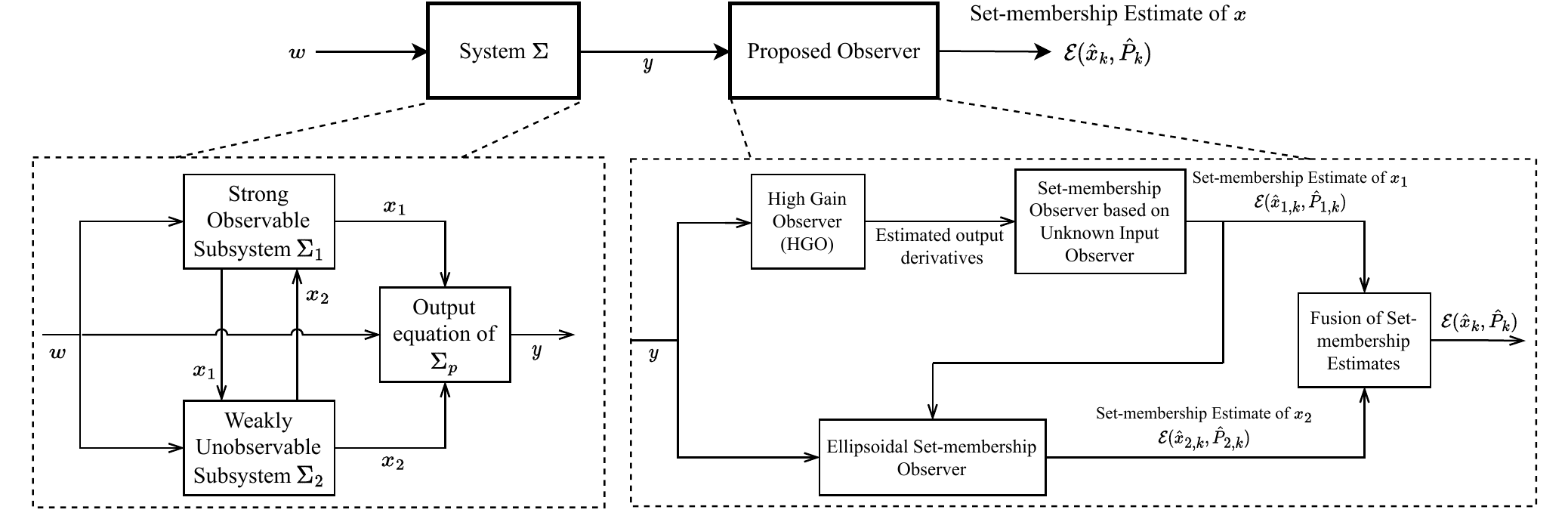}
	\caption{Proposed observer architecture for system $\Sigma$}
	\label{fig1}
\end{figure*}

\textit{Contributions:}	This paper presents a novel set-membership approach to state estimation of the linear time-invariant (LTI) system that integrates the unknown input observer and the ellipsoidal set-membership observer, which has not been attempted in literature, to the best of our knowledge. 
Based on a system decomposition technique, we implement an unknown input observer and an ellipsoidal set-membership observer for the strongly observable and weakly unobservable subsystem, respectively.
Compared with existing representative works, our contributions include relaxing one of the assumptions in \cite{c53}, which is the existence of a transformation matrix to transform the weakly unobservable subsystem into a cooperative form, and providing analytical analysis in which the set estimate computed by our observer is stable, i.e., the set estimate does not grow to infinity, for which less restrictive conditions are required compared to \cite{c19,c20,c21}.
In terms of estimation accuracy, our proposed observer outperforms existing ellipsoidal set-membership observers as well, which is demonstrated thoroughly in the numerical simulations.

	The rest of this paper is organized as follows: we first introduce the preliminaries and problem setup in Section II. The detailed design of the observer is described in Section III. Section IV discusses properties of the proposed algorithm. Two numerical simulations are presented in Section V to illustrate the effectiveness of our proposed algorithm. Finally, Section VI concludes the paper. 
	
	\section{Preliminaries and Problem Setup}
Throughout this paper, we denote an ellipsoid as $\mathcal{E}\big(c,K\big) \triangleq \{x \in R^{x}: (x-c)^TK^{-1}(x-c)\leq 1\}$, where $c\in \mathbb{R}^{x}$  is the center vector and $K \in \mathbb{R}^{x\times x}$ is a symmetric positive definite matrix, called the shape matrix. The pseudoinverse and transpose of a matrix $A$ are denoted as $A^\dagger$ and $A^T$, respectively. $\mathrm{rank}(A)$ and $\mathrm{tr}(A)$ denotes the rank and trace of $A$, respectively. $\mathrm{col}(a_1,a_2,\hdots,a_n)$ denotes a column vector. $||\cdot||$ denotes the standard 2-norm. $\mathrm{diag}(A_1,A_2,\hdots,A_n)$ denotes a block diagonal matrix in which the diagonal elements are $A_1$, $A_2$, $\hdots$, $A_n$.
		$\max Re(\lambda(A))$ denotes the largest real part of $A$'s eigenvalues. $\sigma_{min}(A)$ denotes the smallest singular value of $A$.
	
	In this paper, we will consider the following continuous linear time-invariant (LTI) system
	\begin{equation} \label{eq:1}
		\Sigma: 
		\left\{
		\begin{aligned}
			\dot{x}(t) &= Ax(t) + Bw(t)\\
			y(t) &= Cx(t) + Dw(t)       
		\end{aligned}
		\right.,
	\end{equation}
	where $A \in \mathbb{R}^{n\times n}$, $B \in \mathbb{R}^{n\times n_w}$, $C \in \mathbb{R}^{n_y \times n}$, and $D \in \mathbb{R}^{n_y\times n_w}$ are constant matrices; $x(t)\in \mathbb{R}^{n}$ is the unknown state to be estimated; $y(t) \in \mathbb{R}^{n_y}$ is the measurable output; and $w(t) \in \mathbb{R}^{n_w}$ represents the unknown input vector. The following assumptions are commonly made for set-membership state estimation problems \cite{c19,c20,c21}.
	\begin{assumption} \label{ass:1}
		The unknown input $w(t)$ is bounded, i.e., $w(t) \in \mathcal{E}\big(c_w(t),K_w(t)\big)$.
	\end{assumption}
	\begin{assumption} \label{ass:2}
		The initial state $x(0)$ is bounded, i.e., $x(0) \in \mathcal{E}\big(\hat{x}_0,K_0\big)$.
	\end{assumption}
Given the system $\Sigma$ in (\ref{eq:1}) with its parametric matrices, the output $y(t)$, bound on the unknown input $\mathcal{E}(c_w(t),K_w(t))$, and bound on the initial state $\mathcal{E}(\hat{x}_0,K_0)$, our objectives are to
	design a computationally efficient and accurate state estimation algorithm such that $x(t_k) \in \mathcal{E}(\hat{x}_k,\hat{P}_k)$ for all $k\in\mathbb{N}$ and
	to investigate the conditions for which $\mathcal{E}(\hat{x}_k,\hat{P}_k)$'s are uniformly bounded.

	\section{Observer Design}

	In this section, we first discuss our proposed observer architecture and system decomposition. Then, we proceed by designing an observer for each subsystem and fusing these individual set estimates into a single set estimate. Finally, the proposed approach is summarized in Algorithm 1.
	\subsection{Proposed Observer Architecture and System Decomposition}

	To solve the aforementioned problem, we are motivated to utilize the system's structure, which can be done by decomposing the given system $\Sigma$ into two subsystems: a strongly observable subsystem and a weakly unobservable subsystem such that a different observer can be designed to match each subsystem.
	Our proposed scheme, which can facilitate the characteristics of each subsystem, is illustrated in Figure \ref{fig1}.

   To achieve the desired system decomposition, we follow the methods described in \cite{c15,c16} to obtain the transformation matrix. 
	Next, the following lemma illustrates how to decompose the system $\Sigma$.
	\begin{lemma}[System Decomposition] \label{lem:1}
		For the system $\Sigma$ in (\ref{eq:1}), there exists a coordinate transformation matrix $P_1$ such that for $x_p=\mathrm{col}(x_1,x_2)= P_1x$, one has
		\begin{equation} \label{eqn:1}
			\Sigma_p: 
			\left\{
			\begin{aligned}
				\dot{x}_p(t) &= \underbrace{\begin{bmatrix}A_1&A_3\\A_2&A_4\end{bmatrix}}_{A_p}x_p(t) + \underbrace{\begin{bmatrix}B_1\\B_2\end{bmatrix}}_{B_p}w(t)\\
				y(t) &= \underbrace{\begin{bmatrix}C_1&C_2\end{bmatrix}}_{C_p}x_p(t) + Dw(t)  
			\end{aligned}
			\right.,
		\end{equation}
		where $A_p = P_1AP_1^{-1}$, $B_p = P_1B$, $C_p = CP_1^{-1}$, $x_1 \in \mathbb{R}^{n_1}$, and $x_2 \in \mathbb{R}^{n_2}$. Additionally, the system $\Sigma_p$ satisfies the property that subsystems $\Sigma_1$ and $\Sigma_2$, which are defined as
		\begin{equation}\label{eq:3}
			\Sigma_1:
			\left\{
			\begin{aligned}
				\dot{x}_1&=A_1x_1+B_1'u_1\\
				y &= C_1x_1+D_1'u_1
			\end{aligned}
			\right., 
			\Sigma_2:
			\left\{
			\begin{aligned}
				\dot{x}_2 &= A_4x_2+B_2'u_2\\
				y&=C_2x_2+D_2'u_2
			\end{aligned}
			\right.,
		\end{equation}
		where $
		B_1' = \begin{bmatrix}A_3&B_1\end{bmatrix}$, $D_1' = \begin{bmatrix}C_2&D\end{bmatrix}$, and $u_1 = \mathrm{col}(x_2,w)$,
	   $
		B_2' = \begin{bmatrix}
			A_2&B_2
		\end{bmatrix}$, $D_2' = \begin{bmatrix}
			C_1&D
		\end{bmatrix}$, and
		$u_2 = \mathrm{col}(x_1,w)$,
		are strongly observable and weakly unobservable \cite{c15}, respectively.
	\end{lemma}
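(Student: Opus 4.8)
The plan is to build the transformation $P_1$ around the \emph{weakly unobservable subspace} $\mathcal{V}^*$ of $\Sigma$ and then verify the two structural claims separately. First I would recall the standard characterization that $\Sigma$ is strongly observable iff $\mathcal{V}^* = \{0\}$, where $\mathcal{V}^*$ is the largest output-nulling controlled-invariant subspace, computable by the invariant-subspace recursion $\mathcal{V}_0 = \mathbb{R}^n$, $\mathcal{V}_{k+1} = \{x : \exists\, w,\ Ax + Bw \in \mathcal{V}_k \text{ and } Cx + Dw = 0\}$, which terminates at $\mathcal{V}^*$. Setting $n_2 = \dim \mathcal{V}^*$ and $n_1 = n - n_2$, I would choose a basis adapted to $\mathcal{V}^*$: the columns of $P_1^{-1}$ split as $[\,W\ \ V\,]$ with $\mathrm{im}\,V = \mathcal{V}^*$ and $\mathrm{im}\,W$ a complement, exactly as in \cite{c15,c16}. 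In these coordinates $x_p = P_1 x = \mathrm{col}(x_1,x_2)$, with $x_2$ parametrizing $\mathcal{V}^*$, and the block form (\ref{eqn:1}) is then merely the coordinate image $A_p = P_1 A P_1^{-1}$, $B_p = P_1 B$, $C_p = C P_1^{-1}$ — no zero blocks are being claimed, so existence of (\ref{eqn:1}) is immediate and the entire content lies in the two properties.

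Next I would establish that $\Sigma_2$ is weakly unobservable, i.e. that \emph{every} $x_2(0) \in \mathbb{R}^{n_2}$ is a weakly unobservable state of $\Sigma_2$. Given $x_2(0)$, lift it to $x(0) = P_1^{-1}\mathrm{col}(0,x_2(0)) \in \mathcal{V}^*$; since $\mathcal{V}^*$ is output-nulling controlled invariant, there is an input $w(\cdot)$ keeping the full-system trajectory inside $\mathcal{V}^*$ (hence with $x_1 \equiv 0$) and producing $y \equiv 0$. Feeding this same $w(\cdot)$ together with the input $x_1 \equiv 0$ into $\Sigma_2$ reproduces exactly this $x_2$-trajectory and yields $y = C_2 x_2 + C_1\cdot 0 + Dw \equiv 0$, so $x_2(0)$ is weakly unobservable. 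As $x_2(0)$ was arbitrary, $\mathcal{V}^*_{\Sigma_2} = \mathbb{R}^{n_2}$, which is the definition of $\Sigma_2$ being weakly unobservable.

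The strong observability of $\Sigma_1$ is the hard part. Here I would argue from the \emph{maximality} of $\mathcal{V}^*$: suppose, for contradiction, that $\Sigma_1$ has a nonzero weakly unobservable state $x_1(0)$, so that some input $u_1 = \mathrm{col}(x_2,w)$ yields $y \equiv 0$ along the $\Sigma_1$-dynamics. The delicate point — and the main obstacle — is that in $\Sigma_1$ the coupling variable $x_2$ is a \emph{free input function}, not a constrained state trajectory, so one cannot naively reinterpret $\mathrm{col}(x_1(t),x_2(t))$ as a $\Sigma$-trajectory and directly invoke $\mathcal{V}^*$. I would instead use the equivalent Rosenbrock/invariant-zero characterization of strong observability (full column rank of the system pencil) to show that the pencil of $\Sigma_1$ cannot drop column rank without a corresponding rank drop in the pencil of $\Sigma$ along a direction lying outside $\mathcal{V}^*$; by construction every such direction has already been absorbed into $x_2$, so no weakly unobservable direction survives in $x_1$, giving $\mathcal{V}^*_{\Sigma_1} = \{0\}$. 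Making this step rigorous is where the specific structure produced by the construction of \cite{c15,c16}, together with the normal-rank bookkeeping between the pencils of $\Sigma$, $\Sigma_1$, and $\Sigma_2$, must be exploited, and I expect it to carry the bulk of the technical weight.
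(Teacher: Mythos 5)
Your construction is the right one and matches the route of the references the paper itself leans on (the paper offers no proof of its own here; it defers entirely to Lemma 3.4 of \cite{c15} and Theorem 4 of \cite{c16}, so you are attempting strictly more than the authors do). Choosing $P_1$ adapted to the weakly unobservable subspace $\mathcal{V}^*$ is correct, and your argument for the weak unobservability of $\Sigma_2$ is complete: lifting $x_2(0)$ to a point of $\mathcal{V}^*$ and replaying the output-nulling input with $x_1\equiv 0$ is exactly right. The genuine gap is where you flag it: the strong observability of $\Sigma_1$. You correctly identify the obstacle --- a weakly unobservable state of $\Sigma_1$ only furnishes a \emph{free} input pair $(x_2(\cdot),w(\cdot))$, not a trajectory of $\Sigma$ --- but you then defer the resolution to unspecified ``normal-rank bookkeeping'' between Rosenbrock pencils. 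As written that is a placeholder, not a proof, and the pencil route does not sidestep the difficulty: relating column-rank drops of the pencil of $\Sigma_1$ (whose input channel is the pair $(x_2,w)$) to those of the pencil of $\Sigma$ runs into precisely the same coupling issue you identified in state space.

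The step can be closed geometrically, with no pencils, by an input-correction argument that uses the controlled invariance of $\mathcal{V}^*$ itself. Let $\mathcal{V}_1^*$ denote the weakly unobservable subspace of $\Sigma_1$ and set $\mathcal{W}=P_1^{-1}\bigl(\mathcal{V}_1^*\oplus\mathbb{R}^{n_2}\bigr)\supseteq\mathcal{V}^*$. Take $(v,\xi)$ with $v\in\mathcal{V}_1^*$. Output-nulling controlled invariance of $\mathcal{V}_1^*$ for $\Sigma_1$ yields some pair $(\xi',w')$ with $A_1v+A_3\xi'+B_1w'\in\mathcal{V}_1^*$ and $C_1v+C_2\xi'+Dw'=0$ --- but for the ``wrong'' $\xi'$. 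To repair this, apply the controlled invariance of $\mathcal{V}^*$ to the point $P_1^{-1}\mathrm{col}(0,\xi-\xi')\in\mathcal{V}^*$: there exists $\tilde{w}$ with $A_3(\xi-\xi')+B_1\tilde{w}=0$ and $C_2(\xi-\xi')+D\tilde{w}=0$. Superposing the two inputs shows that $w'+\tilde{w}$ works for $(v,\xi)$: the first block of the drift lands in $\mathcal{V}_1^*$, the second block is unconstrained, and the output vanishes. Hence $\mathcal{W}$ is output-nulling controlled invariant for $\Sigma$, so $\mathcal{W}\subseteq\mathcal{V}^*$ by maximality, forcing $\mathcal{V}_1^*=\{0\}$, i.e., $\Sigma_1$ is strongly observable. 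With this insertion your outline becomes a complete proof.
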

	\begin{proof}
		See Lemma 3.4 in \cite{c15} and Theorem 4 in \cite{c16}.
	\end{proof}

	\subsection{Set-Membership Observer for $\Sigma_1$}
	A set-membership observer based on the unknown input observer is proposed for subsystem $\Sigma_1$, and the derivation is adapted from \cite{c12} to be self-contained.
	The goal is to find a bounding ellipsoid $\mathcal{E}(\hat{x}_1(t),\epsilon_{1}^2(t)I_{n_1})$ that contains the true state $x_1(t)$. First, we are interested in finding the center of the ellipsoid $\hat{x}_1(t)$, which is obtained using an unknown input observer.
	The following assumption is needed for the observer:
	\begin{assumption} \label{ass:10}
		The unknown input $w(t)$ is a sufficiently smooth function, i.e., $w(t) \in \mathbb{C}^l$ for some $l$, and the derivatives of $w(t)$ are bounded. 
	\end{assumption}
	Under Assumption \ref{ass:10}, we can differentiate the output equation in (\ref{eq:3}) $l$ times to get
	\begin{equation} \label{eq:8}
		z_{0:l} = \mathcal{O}_{l}x_1+\mathcal{G}_{l}u_{1,0:l-1},
	\end{equation}
	where
	\begin{gather*}
		\mathcal{O}_{l} = \begin{bsmallmatrix}C_1\\C_1A_1\\\vdots\\C_1A_1^{l}\end{bsmallmatrix}, 
		\mathcal{G}_{l} = \begin{bsmallmatrix}D_1'&0&\hdots&0\\C_1B_1'&D_1'&\hdots&0\\\vdots&\vdots&\ddots&\vdots\\C_1A_1^{l-1}B_1'&C_1A_1^{l-2}B_1'&\hdots&D_1'\end{bsmallmatrix}, \\
		u_{1,0:l-1} = \mathrm{col}(u_1,\dot{u}_1,\hdots,u_1^{(l-1)}),
		z_{0:l}=\mathrm{col}(y,\dot{y},\hdots,y^{(l)}).
	\end{gather*}
	Then, the unknown input observer is given as
	\begin{equation} \label{eq:10}
		\dot{\hat{x}}_1(t) = E\hat{x}_1(t) + F\hat{z}_{0:l},
	\end{equation}
	where $F\in\mathbb{R}^{n_1\times (l+1)n_y}$ is a design matrix satisfying $F\mathcal{G}_l = \begin{bmatrix}B_1'&0&\hdots&0\end{bmatrix}$ and $E = A_1-F\mathcal{O}_l \in \mathbb{R}^{n_1\times n_1}$ is designed to be a stable matrix. Besides that, according to \cite{c31} and strong observability of subsystem $\Sigma_1$, there exists a smallest integer $l \leq n_1$ such that it satisfies $\mathrm{rank}(G_l)-\mathrm{rank}(G_{l-1})=n_2+w$. Here, $\hat{z}_{0:l}$ is a vector of estimated output derivatives, $\hat{z}_{0:l}$, i.e. $\hat{z}_{0:l} = \mathrm{col}(\hat{z}'_0,\hat{z}'_1,\hdots,\hat{z}'_{l})$ where $\hat{z}'_i=\mathrm{col}(\hat{z}_{1,i},\hat{z}_{2,i},\hdots,\hat{z}_{n_y,i})$. We will now discuss how to obtain $\hat{z}_{i,j}$, which is the main idea of the construction of a high gain observer (HGO). 
	
	For all $i=1,2,\hdots,n_y$ and $j=0,1,\hdots,l$, let $z_{i,j}=y^{(j)}_{i}$, where $y^{(j)}_{i}$ denotes the $j$-th derivative of the $i$-th element of $y$. Therefore, defining $z_i = \mathrm{col}(z_{i,0},z_{i,1},\hdots,z_{i,l})$ for $1 \leq i \leq n_y$,
	it can be approximately obtained by using a high gain observer \cite{c30} with the form 
	\begin{equation} \label{eq:6}
		\dot{\hat{z}}_i = \hat{A}_{z,i}\hat{z}_i+\hat{B}_{z,i}y_i,\\
	\end{equation}
	where
	\begin{gather*}
		\hat{A}_{z,i} = \begin{bmatrix}
			-\frac{\theta_{i,0}}{\epsilon}& 1 & 0 &\hdots & 0\\
			-\frac{\theta_{i,1}}{\epsilon^2} & 0 & 1&\hdots & 0\\
			\vdots & \vdots&\vdots&\ddots&\vdots\\
			-\frac{\theta_{i,l}}{\epsilon^{l+1}}&0&0&\hdots&0
		\end{bmatrix},
		\hat{B}_{z,i} = \begin{bmatrix}
			\frac{\theta_{i,0}}{\epsilon}\\\frac{\theta_{i,1}}{\epsilon^2}\\ \vdots \\ \frac{\theta_{i,l+1}}{\epsilon^{l+1}},
		\end{bmatrix}
	\end{gather*}
	for $1\leq i \leq n_y$, $\epsilon$ is a sufficiently small positive constant, and the coefficients $\theta_{i,j}$ are chosen such that the polynomial
	$
		s^{n_1+1}+\theta_{i,0}s^{n_1} + \hdots + \theta_{i,n_1}s+\theta_{i,l}
	$
	is Hurwitz for $i=1,2,\hdots,n_y$. 

	Then, the next step is to find an explicit expression of $\epsilon_{1}(t)$, which represents a time-varying bound on the estimation error between the true state $x_1(t)$ and its estimate $\hat{x}_1(t)$. Without loss of generality, suppose that the bound on the initial estimation error of the output derivatives is known, i.e., we have $\average{z}_0 \geq 0$ such that $||\tilde{z}_{i}(0)|| \leq \average{z}_0, \forall i \in [1,n_y]$. Next, let us define $\epsilon_1(t)$ as
	\begin{equation} \label{eq:12}
	\epsilon_1(t) =\norm{e^{Et}}\norm{P_1K_0P_1^T}^{\frac{1}{2}} + \norm{F}\left(n_y(l+1)\right)^{\frac{1}{2}} \Psi(t),
	\end{equation}
	where
	\begin{equation}\label{eq:DefPsi}
	\begin{aligned}
		\Psi(t) &= \delta\int_0^t \norm{e^{E(t-\tau)}} d\tau \\ &+\left(\frac{K\sqrt{l+1}}{\epsilon^l}\overline{z}_0-\epsilon^{l}\delta\right)\int_0^t \norm{e^{E(t-\tau)}}e^{-\frac{a\tau}{\epsilon}} d\tau.
	\end{aligned}
	\end{equation}
	$K$ and $a$ are chosen such that $||e^{A_{\eta_i}t}|| \leq Ke^{-at}$ is true for all $t$ and $i \in [1,n_y]$, where
		\begin{equation*}
			\small
		A_{\eta_i}= \begin{bmatrix}
			-\theta_{i,0}& 1 & 0 &\hdots & 0\\
			-\theta_{i,1} & 0 & 1&\hdots & 0\\
			\vdots & \vdots&\vdots&\ddots&\vdots\\
			-\theta_{i,l}&0&0&\hdots&0
		\end{bmatrix},
	\end{equation*}
	and $\delta$ should satisfy the inequality $\sup_{t\in[0,\infty)}\frac{||y^{(l)}||K\epsilon}{a} \leq \delta$.
	Finally, the following proposition illustrates the expression of the set-membership observer for subsystem $\Sigma_1$. 
	\begin{proposition} [Set-Membership Observer for $\Sigma_1$]\label{prop:1}
		Consider the strongly observable subsystem $\Sigma_1$ in (\ref{eq:3}) of the given system $\Sigma$ in (\ref{eq:1}) with Assumptions \ref{ass:2} and \ref{ass:10}.
		The set-membership observer for subsystem $\Sigma_1$ takes the form of $\mathcal{E}(\hat{x}_1(t),\epsilon_{1}^2(t)I_{n_1})$, where $\hat{x}_1(t)$ is the solution to (\ref{eq:10}), and $\epsilon_{1}(t)$ is given in (\ref{eq:12}),
		and the ellipsoid $\mathcal{E}(\hat{x}_1(t),\epsilon_{1}^2(t)I_{n_1})$ is guaranteed to contain the true state $x_1(t)$, i.e.,
		$
			x_1(t)\in \mathcal{E}(\hat{x}_1(t),\epsilon_{1}^2(t)I_{n_1})
		$
		 for all $t \geq 0$.
	\end{proposition}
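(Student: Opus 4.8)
The plan is to exploit the special shape of the bounding set: since the shape matrix is $\epsilon_1^2(t)I_{n_1}$, the ellipsoid $\mathcal{E}(\hat{x}_1(t),\epsilon_1^2(t)I_{n_1})$ is simply the ball of radius $\epsilon_1(t)$ centered at $\hat{x}_1(t)$. Indeed, $x_1\in\mathcal{E}(\hat{x}_1,\epsilon_1^2 I_{n_1})$ is equivalent to $(x_1-\hat{x}_1)^T(\epsilon_1^2 I_{n_1})^{-1}(x_1-\hat{x}_1)\le1$, i.e. to $\norm{x_1(t)-\hat{x}_1(t)}\le\epsilon_1(t)$. Hence the proposition reduces entirely to proving the error bound $\norm{e(t)}\le\epsilon_1(t)$ for all $t\ge0$, where $e(t):=x_1(t)-\hat{x}_1(t)$, so the task becomes showing that the right-hand side of (\ref{eq:12}) dominates $\norm{e(t)}$.

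First I would derive the error dynamics. Differentiating $e$, substituting $\dot{x}_1=A_1x_1+B_1'u_1$ from $\Sigma_1$ in (\ref{eq:3}) and the observer (\ref{eq:10}), and using the design relation $F\mathcal{G}_l=\begin{bmatrix}B_1'&0&\cdots&0\end{bmatrix}$ together with (\ref{eq:8}) to rewrite $B_1'u_1=Fz_{0:l}-F\mathcal{O}_lx_1$ (the leading block of $u_{1,0:l-1}$ being $u_1$), the relation $E=A_1-F\mathcal{O}_l$ collapses everything into
\begin{equation*}
\dot{e}(t)=Ee(t)+F\tilde{z}_{0:l}(t),\qquad\tilde{z}_{0:l}:=z_{0:l}-\hat{z}_{0:l}.
\end{equation*}
Since $E$ is Hurwitz by construction, variation of constants and the triangle inequality give
\begin{equation*}
\norm{e(t)}\le\norm{e^{Et}}\,\norm{e(0)}+\norm{F}\int_0^t\norm{e^{E(t-\tau)}}\,\norm{\tilde{z}_{0:l}(\tau)}\,d\tau,
\end{equation*}
which already matches the two-term structure of (\ref{eq:12}).

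It then remains to bound the two pieces. For the initial term, writing $x_1=\begin{bmatrix}I_{n_1}&0\end{bmatrix}P_1x$ and $\hat{x}_1(0)=\begin{bmatrix}I_{n_1}&0\end{bmatrix}P_1\hat{x}_0$ gives $e(0)=\begin{bmatrix}I_{n_1}&0\end{bmatrix}P_1(x(0)-\hat{x}_0)$; maximizing $\norm{P_1(x(0)-\hat{x}_0)}^2$ over the constraint $(x(0)-\hat{x}_0)^TK_0^{-1}(x(0)-\hat{x}_0)\le1$ of Assumption \ref{ass:2} yields $\lambda_{\max}(P_1K_0P_1^T)$, so $\norm{e(0)}\le\norm{P_1K_0P_1^T}^{1/2}$, which is exactly the first summand of (\ref{eq:12}). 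For the driving term, I would invoke the high-gain observer analysis of (\ref{eq:6}): introducing the $\epsilon$-scaled channel errors and using the Hurwitz decay $\norm{e^{A_{\eta_i}t}}\le Ke^{-at}$, one obtains a pointwise estimate of each error component of the form $|\tilde{z}_{i,j}(\tau)|\le\delta+\big(\tfrac{K\sqrt{l+1}}{\epsilon^l}\overline{z}_0-\epsilon^l\delta\big)e^{-a\tau/\epsilon}$, in which $\delta$ absorbs the forcing by $y^{(l)}$ and $\overline{z}_0$ bounds the initial output-derivative error. Aggregating over the $n_y$ channels and $l+1$ derivative orders through $\norm{\tilde{z}_{0:l}(\tau)}\le\sqrt{n_y(l+1)}\max_{i,j}|\tilde{z}_{i,j}(\tau)|$ introduces the factor $(n_y(l+1))^{1/2}$, and inserting this into the integral reproduces precisely $\norm{F}(n_y(l+1))^{1/2}\Psi(t)$ with $\Psi$ as in (\ref{eq:DefPsi}). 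Adding the two bounds gives $\norm{e(t)}\le\epsilon_1(t)$, which is the desired containment.

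The main obstacle will be the high-gain observer error estimate, i.e. justifying the pointwise bound on $|\tilde{z}_{i,j}(\tau)|$. This requires the singular-perturbation scaling of (\ref{eq:6}), careful tracking of the powers of $\epsilon$ and of the peaking transient, and the translation of the Hurwitz property of $A_{\eta_i}$ into the explicit constants $K$, $a$, $\delta$ and the $e^{-a\tau/\epsilon}$ factor appearing in $\Psi$; by contrast, the error-dynamics derivation and the norm maximization for $\norm{e(0)}$ are routine. A secondary point demanding care is the bookkeeping that converts the component-wise observer bounds into the aggregate constant $\sqrt{n_y(l+1)}$.
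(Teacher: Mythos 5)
Your proposal follows essentially the same route as the paper's proof in Appendix A: derive the error dynamics $\dot{e}_1 = Ee_1 + F(z_{0:l}-\hat{z}_{0:l})$ from the design relations $E = A_1 - F\mathcal{O}_l$ and $F\mathcal{G}_l = \begin{bmatrix}B_1'&0&\cdots&0\end{bmatrix}$, apply variation of constants, bound $\norm{e_1(0)}$ by $\norm{P_1K_0P_1^T}^{1/2}$ via Assumption \ref{ass:2}, and bound the high-gain-observer error componentwise (the paper's bound carries powers $\epsilon^{l-k}$ and $\epsilon^{-k}$ per derivative order before taking the worst case $k=l$ using $\epsilon\in(0,1)$, which is exactly the simplification you apply up front) before aggregating with the $\sqrt{n_y(l+1)}$ factor. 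The argument is correct and matches the paper's proof.
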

	\begin{proof}
		See Appendix \ref{app:5} for the detailed proof.
	\end{proof}
	
	\subsection{Ellipsoidal Set-Membership Observer for $\Sigma_2$}
	In this section, we propose an ellipsoidal set-membership observer which is designed for the weakly unobservable subsystem $\Sigma_2$.
	Note that our set-membership observer will be implemented in a discrete-time manner, and we denote the discrete-time instances as $t_k, k=0,1,2,\hdots$ and $t_0=0$ without loss of generality.

	Suppose $x_2(t_k)$ is the true state of $x_2$ at time $t_k$, and our goal is to find a bounding ellipsoid $\mathcal{E}(\hat{x}_{2,k},\hat{P}_{2,k})$ that contains $x_2(t_k)$. Let $\mathcal{E}(\hat{x}_{2,k|k-1},\hat{P}_{2,k|k-1})$ be the propagated ellipsoid, which is the set of possible values of $x_2(t_{k})$ satisfying subsystem $\Sigma_2$'s dynamics. Let $S_k = \{x_2(t_k)\in \mathbb{R}^{n_2}: y(t_k)=C_2x_2(t_k)+D_2'u_2\}$ be the measurement set at time $t_k$. This set is intersected with the propagated ellipsoid and overapproximated, i.e. $\mathcal{E}(\hat{x}_{2,k|k-1},\hat{P}_{2,k|k-1}) \cap S_k \subset \mathcal{E}(\hat{x}_{2,k},\hat{P}_{2,k})$.
	Figure \ref{fig2} illustrates how the ellipsoidal set-membership observer is constructed. Besides that, without loss of generality, the matrices $B_2'$ and $D_2'$ have full row rank (we can introduce some transformation matrix if it is not satisfied).
	Now, we are ready to present the ellipsoidal set-membership observer. First, the propagation step takes the form
	\begin{align}
		\hat{x}_{2,k|k-1} &= e^{A_4\Delta t}\hat{x}_{2,k-1}+\int_{t_{k-1}}^{t_{k}}e^{A_4(t_{k}-\tau)}B_2'\begin{bmatrix}\hat{x}_1(\tau)\\c_w(\tau)\end{bmatrix}d\tau, \label{eq:50}\\
		\hat{P}_{2,k|k-1} &= \frac{e^{A_4\Delta t} \hat{P}_{2,k-1}e^{A_4^T\Delta t}}{\alpha_k} + \frac{\Delta tM_{2,k}}{1-\alpha_k} \label{eq:51},
	\end{align}
	and the measurement step is given as
	\begin{align}
		\hat{x}_{2,k} &= \hat{x}_{2,k|k-1} + O_k\left(y(t_k)-C_2\hat{x}_{2,k|k-1}-D_2'\begin{bmatrix}
			\hat{x}_1(t_k)\\c_w(t_k)
		\end{bmatrix}\right), \label{eq:22}\\
		\hat{P}_{2,k} &= \frac{1}{1-\beta_k}(I-O_kC_2)\hat{P}_{2,k|k-1}, \label{eq:23}
	\end{align}
	where
	\begin{gather*}
		M_{2,k} = \int_{t_{k-1}}^{t_k} e^{A_4(t_k-\tau)}B_2'K_u(\tau)B_2^{\prime T}e^{A_4^T(t_k-\tau)}d\tau,\\
		O_k = \frac{1}{1-\beta_k}\hat{P}_{2,k|k-1}C_2^T\left(\frac{1}{1-\beta_k}C_2\hat{P}_{2,k|k-1}C_2^T+\frac{G_k}{\beta_k}\right)^{-1},\\
		G_k = D_2'K_u(t_k)D_2^{\prime T}, K_u(t) = \begin{bmatrix}
			\gamma_k\epsilon_1^2(t)I_{n_1}&0\\0&\frac{\gamma_k}{\gamma_k-1}K_w(t)
		\end{bmatrix},
	\end{gather*}
	 and $\Delta t = t_k-t_{k-1}$. The following proposition shows the set-membership observer for subsystem $\Sigma_2$.
	\begin{proposition} [Set-Membership Observer for $\Sigma_2$] \label{prop:2}
		Consider the weakly unobservable subsystem $\Sigma_2$ in (\ref{eq:3}) of the given system $\Sigma$ in (\ref{eq:1}) with Assumptions \ref{ass:1} and \ref{ass:2}. If $x_2(t_{k-1}) \in \mathcal{E}(\hat{x}_{2,k-1},\hat{P}_{2,k-1})$, and we apply the propagation and measurement steps according to (\ref{eq:50}), (\ref{eq:51}), (\ref{eq:22}), and (\ref{eq:23}), and we choose $\alpha_k$, $\beta_k$ and $\gamma_k$ such that $\alpha_k \in (0,1)$, $\beta_k \in (0,1)$ and $\gamma_k>1$, the set-membership observer for subsystem $\Sigma_2$ takes the form of $\mathcal{E}(\hat{x}_{2,k},\hat{P}_{2,k})$, and the ellipsoid $\mathcal{E}(\hat{x}_{2,k},\hat{P}_{2,k})$ is guaranteed to contain the true state $x_2(t_k)$, i.e.,
		$
			x_2(t_{k}) \in \mathcal{E}(\hat{x}_{2,k},\hat{P}_{2,k})
		$
		for all $k\in \mathbb{N}$.
	\end{proposition}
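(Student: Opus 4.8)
The plan is to prove the stated one-step set inclusion, from which the claim for all $k \in \mathbb{N}$ follows by induction, the base case $x_2(t_0) \in \mathcal{E}(\hat{x}_{2,0},\hat{P}_{2,0})$ being furnished by Assumption~\ref{ass:2} after the coordinate change of Lemma~\ref{lem:1}. The argument proceeds in the two phases mirroring the observer equations: a \emph{propagation} phase establishing $x_2(t_k) \in \mathcal{E}(\hat{x}_{2,k|k-1},\hat{P}_{2,k|k-1})$, followed by a \emph{measurement} phase establishing $x_2(t_k) \in \mathcal{E}(\hat{x}_{2,k},\hat{P}_{2,k})$. Throughout, the three workhorses are elementary set-inclusion facts: the image of an ellipsoid under an affine map is an ellipsoid; the Minkowski sum of $\mathcal{E}(c_1,P_1)$ and $\mathcal{E}(c_2,P_2)$ is contained in $\mathcal{E}\big(c_1+c_2,\tfrac{1}{\alpha}P_1+\tfrac{1}{1-\alpha}P_2\big)$ for every $\alpha \in (0,1)$; and the S-procedure characterization of when one quadratic inequality is implied by a convex combination of two others.

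First I would bound the driving input $u_2=\mathrm{col}(x_1,w)$. By Proposition~\ref{prop:1}, $x_1(\tau)\in\mathcal{E}(\hat{x}_1(\tau),\epsilon_1^2(\tau)I_{n_1})$, and by Assumption~\ref{ass:1}, $w(\tau)\in\mathcal{E}(c_w(\tau),K_w(\tau))$. Merging these two ellipsoidal constraints into a single ellipsoid for the stacked vector is exactly a two-term S-procedure with weights $\gamma_k^{-1}$ and $1-\gamma_k^{-1}$, which yields $u_2(\tau)\in\mathcal{E}\big(\mathrm{col}(\hat{x}_1(\tau),c_w(\tau)),K_u(\tau)\big)$ with $K_u$ as defined; the hypothesis $\gamma_k>1$ is precisely what keeps both weights positive so that $K_u\succ 0$. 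The exact flow of $\Sigma_2$ over $[t_{k-1},t_k]$ is $x_2(t_k)=e^{A_4\Delta t}x_2(t_{k-1})+\int_{t_{k-1}}^{t_k}e^{A_4(t_k-\tau)}B_2'u_2(\tau)\,d\tau$. The homogeneous term maps the hypothesized ellipsoid onto $\mathcal{E}\big(e^{A_4\Delta t}\hat{x}_{2,k-1},e^{A_4\Delta t}\hat{P}_{2,k-1}e^{A_4^T\Delta t}\big)$, while the forced term, whose center matches the integral in (\ref{eq:50}), is an integral of the ellipsoids $e^{A_4(t_k-\tau)}B_2'\mathcal{E}(0,K_u(\tau))$; bounding it by $\mathcal{E}(0,\Delta t\,M_{2,k})$ follows from a Cauchy--Schwarz estimate on support functions, $\int_{t_{k-1}}^{t_k}\sqrt{\ell^T Q(\tau)\ell}\,d\tau \le \sqrt{\Delta t}\,\big(\ell^T\!\int_{t_{k-1}}^{t_k}Q(\tau)\,d\tau\,\ell\big)^{1/2}$ with $Q(\tau)=e^{A_4(t_k-\tau)}B_2'K_u(\tau)B_2^{\prime T}e^{A_4^T(t_k-\tau)}$, which is where the factor $\Delta t$ in (\ref{eq:51}) originates. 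Overapproximating the Minkowski sum of these two ellipsoids with parameter $\alpha_k\in(0,1)$ gives exactly $\mathcal{E}(\hat{x}_{2,k|k-1},\hat{P}_{2,k|k-1})$.

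For the measurement phase, I would recast the output constraint so the unknown input becomes a bounded additive perturbation: defining $\bar{y}_k = y(t_k)-D_2'\,\mathrm{col}(\hat{x}_1(t_k),c_w(t_k))$ and $v_k=D_2'\big(\mathrm{col}(\hat{x}_1(t_k),c_w(t_k))-u_2(t_k)\big)$, the measurement set reads $C_2x_2(t_k)=\bar{y}_k-v_k$ with $v_k\in\mathcal{E}(0,G_k)$, $G_k=D_2'K_u(t_k)D_2^{\prime T}$. Hence the true state simultaneously satisfies $x_2(t_k)\in\mathcal{E}(\hat{x}_{2,k|k-1},\hat{P}_{2,k|k-1})$ and $(C_2x_2(t_k)-\bar{y}_k)^T G_k^{-1}(C_2x_2(t_k)-\bar{y}_k)\le 1$. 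Combining these two quadratic inequalities by the S-procedure with weights $1-\beta_k$ and $\beta_k$ produces a single ellipsoid; a completion of squares (equivalently a Schur-complement manipulation) identifies its center update through the gain $O_k$ as in (\ref{eq:22}) and its shape matrix as $\hat{P}_{2,k}$ in (\ref{eq:23}). Since $x_2(t_k)$ lies in the intersection, it lies in this fused ellipsoid, which is the desired conclusion and closes the induction.

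I expect the measurement update to be the main obstacle. The delicate point is verifying that the specific gain $O_k$ and the $(1-\beta_k)^{-1}$ scaling make the S-procedure combination a valid overapproximation of the \emph{intersection} for every admissible $v_k$, rather than merely of the individual constraints, and that $\hat{P}_{2,k}\succ 0$ so the output is a genuine ellipsoid; the latter is secured by $\beta_k\in(0,1)$ together with $\hat{P}_{2,k|k-1}\succ 0$ and $G_k\succ 0$. A secondary care is the continuous integral-of-ellipsoids bound producing the $\Delta t$ factor, which must hold uniformly over all admissible input signals $u_2(\cdot)$ and not merely pointwise in $\tau$.
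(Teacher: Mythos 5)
Your proposal is correct and reconstructs exactly the derivation the paper delegates to Theorem~2 of \cite{c32}: the Cartesian-product overbound on $u_2$ with weight $\gamma_k$ (Lemma~\ref{lem:4}), the affine image plus Minkowski-sum/integral overbound with weight $\alpha_k$ yielding (\ref{eq:50})--(\ref{eq:51}), and the convex-combination (S-procedure) fusion with weight $\beta_k$ whose completion of squares gives $O_k$ and (\ref{eq:23}), consistent with the Woodbury form (\ref{eq:55}) used later in Appendix~\ref{app:8}. The only point worth flagging is one you already half-note: the argument necessarily invokes Proposition~\ref{prop:1} (hence Assumption~\ref{ass:10}) to justify $x_1(\tau)\in\mathcal{E}(\hat{x}_1(\tau),\epsilon_1^2(\tau)I_{n_1})$ inside $K_u(\tau)$, even though the proposition as stated lists only Assumptions~\ref{ass:1} and~\ref{ass:2}.
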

	\begin{proof}
		The proof is adapted from Theorem 2 in \cite{c32}.
	\end{proof}
	\begin{figure}[!t]
		\centering
		\includegraphics[clip,trim=1cm 10.5cm 1cm 7.8cm, width=\columnwidth]{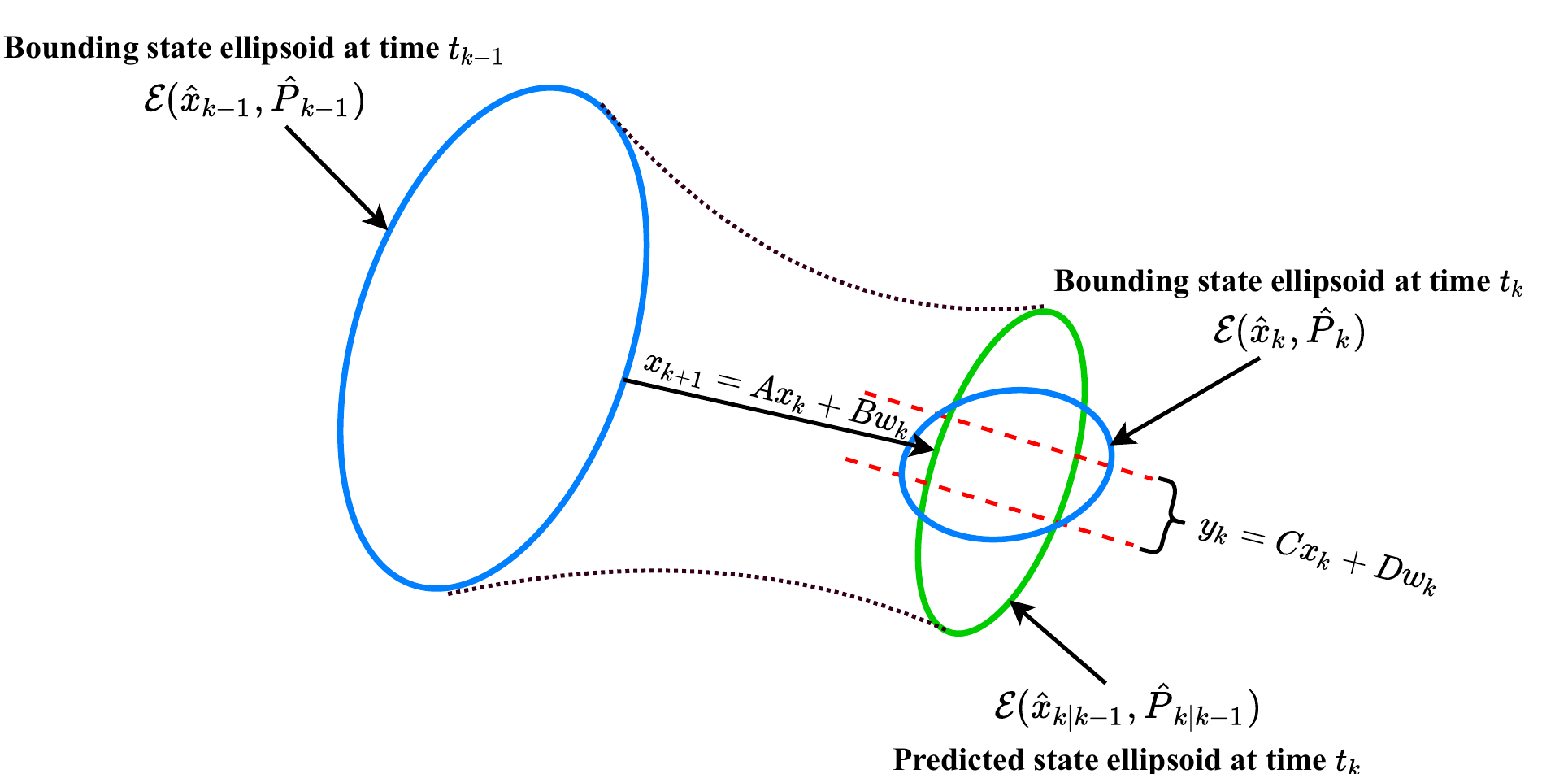}
		\caption{Propagation and measurement steps of the ellipsoidal set-membership observer}
		\label{fig2}
	\end{figure}

	Note that $\alpha_k$, $\beta_k$, and $\gamma_k$ are design parameters, and
	there are different criteria that we can consider, thus resulting in the most nonconservative bounding ellipsoid. As for finding $\alpha_k$, we can refer to the approach in \cite{c32}, which will minimize the trace of $\hat{P}_{2,k|k-1}$, and it yields the solution
	\begin{equation} \label{eq:91}
		\alpha_k = \frac{\sqrt{\mathrm{tr}(M_{2,k})}}{\sqrt{\mathrm{tr}(M_{2,k})}+\sqrt{\mathrm{tr}(e^{A_4\Delta t}\hat{P}_{2,k-1}e^{A_4^T\Delta t})}}.
	\end{equation}
 Since the observation step involves intersection of two ellipsoids, which is, in general, not an ellipsoid, it is more challenging to find the parameter $\beta_k$. Nevertheless, we can find $\beta_k$ by minimizing the trace of $\hat{P}_{2,k}$ through an optimization problem. The constraints of the optimization problem is obtained by rewriting (\ref{eq:23}), and it is as follows:
	\begin{equation} \label{eq:17}
	\begin{aligned}
		&\underset{\beta_k}{\text{minimize}}
		\qquad \mathrm{tr}\left(X_k^{-1}\right) \\
		& \text{subject to} \quad
		X_k = (1-\beta_k)\hat{P}_{2,k|k-1}^{-1}+\beta_kC_2^TG_k^{-1}C_2,\\
		& \qquad \qquad \qquad \qquad \quad 0< \beta_k <1, \hspace{0.25em} X_k \succ 0 
		\end{aligned}
	\end{equation}

	This optimization problem is convex and solvable using semidefinite programming (SDP). Lastly, the parameter $\gamma_k$ is associated with finding the minimal bounding ellipsoid for $u_2$, which involves an overapproximation of the Cartesian product of $\mathcal{E}(\hat{x}_1(t),\epsilon_1^2(t)I_{n_1})$ and $\mathcal{E}(c_w(t),K_w(t))$. The following lemma helps determine $\gamma_k$.
	\begin{lemma} \label{lem:4}
		Given $x_{q1} \in \mathcal{E}(\hat{x}_{q_1},Q_1)$ and $x_{q_2} \in \mathcal{E}(\hat{x}_{q_2},Q_2)$, where $\hat{x}_{q_1}\in\mathbb{R}^{n_{q_1}}$, $Q_1 \in \mathbb{R}^{n_{q_1}\times n_{q_1}}$, $\hat{x}_{q_2}\in\mathbb{R}^{n_{q_2}}$, and $Q_2 \in \mathbb{R}^{n_{q_2}\times n_{q_2}}$, we have $\mathrm{col}(x_{q_1},x_{q_2}) \in \mathcal{E}(\hat{x}_q,Q)$, where
		$
			\hat{x}_q = \mathrm{col}(\hat{x}_{q_1}, \hat{x}_{q_2})$ and
			$Q = \mathrm{diag}(
				gQ_1,\frac{g}{g-1}Q_2
			)$
		for all $g > 1$. 
		In addition, if 
		$
			g = \sqrt{\frac{\mathrm{tr}(Q_2)}{\mathrm{tr}(Q_1)}}+1,
		$
		then, the trace of $Q$ is minimized.
	\end{lemma}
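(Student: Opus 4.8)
The plan is to establish the lemma in two stages: first I would prove the containment claim for an arbitrary parameter $g>1$, and then I would carry out a one-dimensional optimization to pinpoint the trace-minimizing $g$. The containment is a direct verification against the quadratic-form definition of an ellipsoid given in the preliminaries, and the optimality reduces to elementary calculus, so no heavy machinery is needed.

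For the containment claim, I would begin by writing $\xi_1 = x_{q_1}-\hat{x}_{q_1}$ and $\xi_2 = x_{q_2}-\hat{x}_{q_2}$, so that the hypotheses $x_{q_1}\in\mathcal{E}(\hat{x}_{q_1},Q_1)$ and $x_{q_2}\in\mathcal{E}(\hat{x}_{q_2},Q_2)$ read $\xi_1^TQ_1^{-1}\xi_1 \le 1$ and $\xi_2^TQ_2^{-1}\xi_2 \le 1$. Since $Q=\mathrm{diag}(gQ_1,\frac{g}{g-1}Q_2)$ is block diagonal and both $g>0$ and $\frac{g}{g-1}>0$ hold for $g>1$, the matrix $Q$ is symmetric positive definite (so the ellipsoid $\mathcal{E}(\hat{x}_q,Q)$ is well-defined) and its inverse is $Q^{-1}=\mathrm{diag}(\frac{1}{g}Q_1^{-1},\frac{g-1}{g}Q_2^{-1})$. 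Evaluating the defining quadratic form of $\mathcal{E}(\hat{x}_q,Q)$ at $\mathrm{col}(x_{q_1},x_{q_2})$ then splits into $\frac{1}{g}\xi_1^TQ_1^{-1}\xi_1 + \frac{g-1}{g}\xi_2^TQ_2^{-1}\xi_2$. The crucial observation is that the two weights $\frac{1}{g}$ and $\frac{g-1}{g}$ are nonnegative and sum to exactly $1$; bounding each quadratic term by its hypothesis value of $1$ therefore yields a total no larger than $\frac{1}{g}+\frac{g-1}{g}=1$, which is precisely the membership condition $\mathrm{col}(x_{q_1},x_{q_2})\in\mathcal{E}(\hat{x}_q,Q)$. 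Because this argument uses only $g>1$, it proves the first assertion for every admissible $g$.

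For the optimality claim, I would treat $\mathrm{tr}(Q)=g\,\mathrm{tr}(Q_1)+\frac{g}{g-1}\mathrm{tr}(Q_2)$ as a scalar function of $g$ on $(1,\infty)$. Differentiating gives $\frac{d}{dg}\mathrm{tr}(Q)=\mathrm{tr}(Q_1)-\frac{\mathrm{tr}(Q_2)}{(g-1)^2}$; setting this to zero and solving $(g-1)^2=\mathrm{tr}(Q_2)/\mathrm{tr}(Q_1)$, while taking the positive root so that $g>1$, delivers exactly $g=\sqrt{\mathrm{tr}(Q_2)/\mathrm{tr}(Q_1)}+1$. I would close this part with a sign check of the second derivative, $\frac{2\,\mathrm{tr}(Q_2)}{(g-1)^3}>0$, which certifies that the stationary point is the global minimizer on $(1,\infty)$ rather than a maximum or saddle.

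I expect no serious technical obstacle here, since both parts collapse to short computations; the genuine content is recognizing the correct parametrization. The one point that warrants care is reading the block-diagonal structure of $Q$ as a convex-combination device: the scalars $\frac{1}{g}$ and $\frac{g-1}{g}$ must be identified with complementary weights $\lambda$ and $1-\lambda$, and it is precisely this summing-to-one property (not any individual property of $Q_1$ or $Q_2$) that forces the enclosing quadratic form below $1$. Once that structure is made explicit, the trace minimization is a routine single-variable exercise.
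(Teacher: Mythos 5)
Your proof is correct and follows essentially the same route as the paper's: the containment is verified by splitting the quadratic form of $\mathcal{E}(\hat{x}_q,Q)$ into the convex combination $\frac{1}{g}\xi_1^TQ_1^{-1}\xi_1+\frac{g-1}{g}\xi_2^TQ_2^{-1}\xi_2$ with weights summing to one, and the optimal $g$ is found by setting the derivative of $\mathrm{tr}(Q)=g\,\mathrm{tr}(Q_1)+\frac{g}{g-1}\mathrm{tr}(Q_2)$ to zero. Your explicit second-derivative check is a small addition the paper omits, but the argument is otherwise identical.
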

	\begin{proof}
		See Appendix \ref{app:2}.
	\end{proof}
	Therefore, applying Lemma \ref{lem:4} to $K_u(t)$, $\gamma_k$, which minimizes $\mathrm{tr}(K_u(t))$, can be computed as
	\begin{equation} \label{eq:92}
		\small 
		\gamma_k = \sqrt{\frac{\mathrm{tr}(K_w(t_k))}{\mathrm{tr}(\epsilon_{1,k}^2I_{n_1})}}+1,
	\end{equation}
	where $\epsilon_{1,k}$ is the solution of (\ref{eq:12}) at time $t_k$.
	
	\begin{remark}
		In fact, there can be alternative criteria in choosing the parameters $\alpha_k$, $\beta_k$, and $\gamma_k$, which include volume minimization \cite{c9} and maximizing the decrease of a Lyapunov function on the estimation error \cite{c20}. For our proposed algorithm, we consider the minimum trace as a design criterion, and the readers might consider other criteria, depending on the specific applications.
	\end{remark}
	
	\subsection{Fusion of Set-Membership Estimates}
	In this section, we develop a method to combine the two set estimates from subsystems $\Sigma_1$ and $\Sigma_2$ into a set estimate for the original system $\Sigma$, i.e., we want to find a bounding ellipsoid $\mathcal{E}(\hat{x}_k,\hat{P}_k)$ that contains $x(t_k)$. The following theorem will illustrate how to integrate the set estimates from the two observers for subsystems $\Sigma_1$ and $\Sigma_2$.
	\begin{theorem}[Estimation Set of the Original State] \label{th:2}
		For the system $\Sigma$ in (\ref{eq:1}) with its transformed system $\Sigma_p$ in (\ref{eqn:1}), suppose $x_1(t_k) \in \mathcal{E}(\hat{x}_{1,k},\epsilon_{1,k}^2 I_{n_1})$ and $x_2(t_k) \in \mathcal{E}(\hat{x}_{2,k},\hat{P}_{2,k})$. Then, $x(t_k) \in \mathcal{E}(\hat{x}_k,\hat{P}_k)$, where
		$
			\hat{x}_k = P_1^{-1}\mathrm{col}(
				\hat{x}_{1,k},\hat{x}_{2,k}
			)$ and $ \hat{P}_k = P_1^{-1}\mathrm{diag}(
				\mu_k\epsilon_{1,k}^2I_{n_1}, \frac{\mu_k}{\mu_k-1} \hat{P}_{2,k}
			)P_1^{-T},
		$
		for all $\mu_k > 1$. In addition, if $\mu_k = \sqrt{\frac{\mathrm{tr}(\hat{P}_{2,k})}{\mathrm{tr}(\epsilon_{1,k}^2I_{n_1})}}+1$, $\mathrm{tr}(\hat{P}_k)$ is minimized.
	\end{theorem}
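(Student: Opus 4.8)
The plan is to build the fused estimate in two stages: first stack the two subsystem ellipsoids into a single ellipsoid for the transformed state $x_p(t_k)=\mathrm{col}(x_1(t_k),x_2(t_k))$ via Lemma \ref{lem:4}, then transport that ellipsoid back to the original coordinates through the nonsingular map $x=P_1^{-1}x_p$ using the affine-image property of ellipsoids. The inclusion part is then essentially immediate, and the trace claim reduces to the one-variable optimization already solved in Lemma \ref{lem:4}, up to a subtlety about which coordinate system the trace is measured in.

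For the first stage, I would apply Lemma \ref{lem:4} with $x_{q_1}=x_1(t_k)$, $\hat{x}_{q_1}=\hat{x}_{1,k}$, $Q_1=\epsilon_{1,k}^2 I_{n_1}$, $x_{q_2}=x_2(t_k)$, $\hat{x}_{q_2}=\hat{x}_{2,k}$, $Q_2=\hat{P}_{2,k}$, and $g=\mu_k$. Using the hypotheses $x_1(t_k)\in\mathcal{E}(\hat{x}_{1,k},\epsilon_{1,k}^2 I_{n_1})$ and $x_2(t_k)\in\mathcal{E}(\hat{x}_{2,k},\hat{P}_{2,k})$, Lemma \ref{lem:4} yields, for every $\mu_k>1$, that $x_p(t_k)\in\mathcal{E}(\hat{x}_{p,k},Q_{p,k})$ with center $\hat{x}_{p,k}=\mathrm{col}(\hat{x}_{1,k},\hat{x}_{2,k})$ and shape $Q_{p,k}=\mathrm{diag}(\mu_k\epsilon_{1,k}^2 I_{n_1},\frac{\mu_k}{\mu_k-1}\hat{P}_{2,k})$. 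For the second stage I would invoke the standard fact that a nonsingular linear map $T$ sends $\mathcal{E}(c,K)$ to $\mathcal{E}(Tc,TKT^T)$, which follows by substituting $x_p=T^{-1}x$ into $(x_p-c)^TK^{-1}(x_p-c)\le1$ and recognizing $(TKT^T)^{-1}=T^{-T}K^{-1}T^{-1}$. Taking $T=P_1^{-1}$, $c=\hat{x}_{p,k}$, and $K=Q_{p,k}$ gives $x(t_k)=P_1^{-1}x_p(t_k)\in\mathcal{E}(\hat{x}_k,\hat{P}_k)$ with exactly $\hat{x}_k=P_1^{-1}\mathrm{col}(\hat{x}_{1,k},\hat{x}_{2,k})$ and $\hat{P}_k=P_1^{-1}Q_{p,k}P_1^{-T}$, establishing the inclusion claim.

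For the trace-minimization claim, the idea is to reuse the optimization behind Lemma \ref{lem:4}. The step I expect to be the real obstacle is that the objective is $\mathrm{tr}(\hat{P}_k)=\mathrm{tr}(P_1^{-1}Q_{p,k}P_1^{-T})=\mathrm{tr}(WQ_{p,k})$ with the weight $W=(P_1P_1^T)^{-1}\succ 0$, rather than $\mathrm{tr}(Q_{p,k})$ itself. Partitioning $W$ conformably with $Q_{p,k}$, the off-diagonal blocks of $W$ pair with the zero blocks of $Q_{p,k}$ and drop out, leaving $\mathrm{tr}(\hat{P}_k)=\mu_k\epsilon_{1,k}^2\,\mathrm{tr}(W_{11})+\frac{\mu_k}{\mu_k-1}\mathrm{tr}(W_{22}\hat{P}_{2,k})$. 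Minimizing this scalar function over $\mu_k>1$ (its derivative vanishes at $\mu_k-1=\sqrt{\mathrm{tr}(W_{22}\hat{P}_{2,k})/(\epsilon_{1,k}^2\mathrm{tr}(W_{11}))}$) reproduces the form of the Lemma \ref{lem:4} minimizer, and it collapses to the stated $\mu_k=1+\sqrt{\mathrm{tr}(\hat{P}_{2,k})/\mathrm{tr}(\epsilon_{1,k}^2I_{n_1})}$ exactly when $W=I$, i.e. when $P_1$ is orthogonal. I would therefore close the argument by taking the decomposition transformation $P_1$ of Lemma \ref{lem:1} to be orthonormal (so $W=I$ and the weighted trace coincides with $\mathrm{tr}(Q_{p,k})$), or, should $P_1$ not be orthogonal, by restating the optimality as minimizing $\mathrm{tr}(Q_{p,k})$ in the transformed coordinates, where Lemma \ref{lem:4} applies verbatim.
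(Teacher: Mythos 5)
Your proof is correct, and for the containment claim it is essentially the paper's own argument in a different packaging: the paper directly expands $(x(t_k)-\hat{x}_k)^T\hat{P}_k^{-1}(x(t_k)-\hat{x}_k)$ using $x(t_k)=P_1^{-1}\mathrm{col}(x_1(t_k),x_2(t_k))$ and observes that the result is $\frac{1}{\mu_k}(\cdot)+\frac{\mu_k-1}{\mu_k}(\cdot)\leq 1$, which is exactly what your two-stage route (Lemma \ref{lem:4} stacking followed by the affine-image property of ellipsoids under $P_1^{-1}$) produces when unwound; the modularity buys you nothing essential here but costs nothing either. Where you genuinely depart from the paper is the trace claim, and your instinct is right: the paper disposes of it with the single phrase ``applying Lemma \ref{lem:4} to $\hat{P}_k$,'' but Lemma \ref{lem:4} minimizes $\mathrm{tr}(Q_{p,k})$, whereas the theorem asserts minimality of $\mathrm{tr}(\hat{P}_k)=\mathrm{tr}\bigl((P_1P_1^T)^{-1}Q_{p,k}\bigr)$, a weighted trace. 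Your computation showing that the true minimizer of $\mathrm{tr}(\hat{P}_k)$ is $\mu_k-1=\sqrt{\mathrm{tr}(W_{22}\hat{P}_{2,k})/(\epsilon_{1,k}^2\mathrm{tr}(W_{11}))}$ with $W=(P_1P_1^T)^{-1}$, and that this coincides with the stated formula only when $P_1$ is orthogonal (or when one agrees to measure the trace in the transformed coordinates), is a sharper and more honest treatment than the paper's; the paper's statement should really be read as minimizing $\mathrm{tr}(Q_{p,k})$, i.e., the trace before conjugation by $P_1^{-1}$. Since the trace criterion is only a design heuristic and the containment holds for every $\mu_k>1$, this imprecision does not affect the validity of the observer, but your flagging of it is a genuine catch rather than a gap in your own argument.
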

	\begin{proof}
		First, since $x(t_k) = P_1^{-1}\mathrm{col}(x_1(t_k),x_2(t_k))$ by Lemma \ref{lem:1}, one can rewrite $(x(t_k)-\hat{x}_k)^T\hat{P}_k^{-1}(x(t_k)-\hat{x}_k)$ as 
		\begin{align}\label{eq:24}
			&\frac{1}{\mu_k}(x_1(t_k)-\hat{x}_{1,k})^T\frac{1}{\epsilon_{1,k}^2}I_{n_1}(x_1(t_k)-\hat{x}_{1,k})\nonumber\\
			&\quad +\frac{\mu_k-1}{\mu_k}(x_2(t_k)-\hat{x}_{2,k})^T\hat{P}_{2,k}^{-1}(x_2(t_k)-\hat{x}_{2,k}).
		\end{align}
		Since we have
		$x_1(t_k) \in \mathcal{E}(\hat{x}_{1,k},\epsilon_{1,k}^2I_{n_1})$ and  $x_2(t_k) \in \mathcal{E}(\hat{x}_{2,k},\hat{P}_{2,k})$, (\ref{eq:24}) implies
		 that $x(t_k) \in \mathcal{E}(\hat{x}_k,\hat{P}_k)$. Additionally, applying Lemma \ref{lem:4} to to $\hat{P}_k$, one has $\mu_k = \sqrt{\frac{\mathrm{tr}(\hat{P}_{2,k})}{\mathrm{tr}(\epsilon_{1,k}^2I_{n_1})}}+1$, and the proof is completed.
	\end{proof}
	We summarize the proposed set-membership observer in Algorithm 1.
	\begin{algorithm}
		\begin{algorithmic}[1]
			\renewcommand{\algorithmicrequire}{\textbf{Input:}}
			\algnewcommand\algorithmicinput{\textbf{Design parameters:}}
			\algnewcommand\Param{\item[\algorithmicinput]}
			\renewcommand{\algorithmicensure}{\textbf{Output:}}
			\algnewcommand\algorithmicforeach{\textbf{for each}}
			\algdef{S}[FOR]{ForEach}[1]{\algorithmicforeach\ #1\ \algorithmicdo}
			\Require $\hat{x}_0,K_0,c_w(t),K_w(t)$
			\Ensure  $\mathcal{E}(\hat{x}_k,\hat{P}_k)$
			\If{$t=0$}
			\State Compute $P_1$ based on \cite{c15} and  \cite{c16}
			\State Use $P_1$ on $\hat{x}_0$ and $K_0$ to find $\hat{x}_{1,0}$, $\hat{x}_{2,0}$, $\epsilon_{1,0}$ and $\hat{P}_{2,0}$
			\Else
			\State Obtain $\hat{x}_1(t)$ and $\epsilon_1(t)$ based on (\ref{eq:10}) and (\ref{eq:12})
			
			\ForEach{$t = t_k$}
			\State Find $\alpha_k$ and $\gamma_k$ using (\ref{eq:91}) and (\ref{eq:92})
			\State Obtain $\hat{x}_{2,k|k-1}$ and $\hat{P}_{2,k|k-1}$ using (\ref{eq:50}) and (\ref{eq:51})
			\If{$G_k=0$}
			\State Set $\hat{x}_{2,k}$ and $\hat{P}_{2,k}$ to $\hat{x}_{2,k|k-1}$ and $\hat{P}_{2,k|k-1}$
			\Else
			\State Find $\beta_k$ using (\ref{eq:17})
			\State Obtain $\hat{x}_{2,k}$ and $\hat{P}_{2,k}$ using (\ref{eq:22}), (\ref{eq:23})
			\EndIf
			\State Obtain $\hat{x}_k$, $\hat{P}_k$ and $\mu_k$ using Theorem \ref{th:2}
			\EndFor
			\EndIf
		\end{algorithmic}
		\caption{Novel Set-Membership Observer for System $\Sigma$}
	\end{algorithm}

	\begin{remark}
		In a practical control system, such as in reachability application \cite{c59}, it is common that the system we consider is an uncertain nonlinear system, and we believe that the proposed approach can be extended to such a case. Adapting the approach from \cite{c19}, the idea is to linearize the nonlinear system about the state estimate and to bound the remainder term using interval mathematics. In this case, there are two steps of overapproximations: interval overapproximation of the ellipsoid set estimate (to apply the interval mathematics) and outer bounding ellipsoid of the interval of the remainder term (minimum volume or trace can be used as a metric). Then, we can implement our proposed approach to the linearized system with the bounded disturbance and remainder term.
	\end{remark}
	\section{Performance Analysis} \label{sec:pa}
	In this section, we would like to investigate some conditions under which the set estimate computed with Algorithm 1 will not become unbounded. The authors in \cite{c19, c20, c21} have already given some discussions on how to prove related statements, but they are not sufficient for our theoretical analysis, and we present the detailed analytical analysis in this section. Besides, it will be shown that our theoretical work will provide more relaxed conditions on the boundedness of $\hat{P}_k$.
	Hence, it is of interest to find positive definite matrices $\underline{P},\overline{P} \succ 0$ such that $\underline{P} \preceq \hat{P}_{k} \preceq \overline{P}$, for all $k \in \mathbb{N}$. First, let the following assumption be made.
	\begin{assumption}\label{ass:13}
		There exist positive constants $\underline{\alpha}$, $\overline{\alpha}$, $\underline{\beta}$, $\overline{\beta}$, $\underline{w}$, $ \overline{w}$ such that $\underline{\alpha}\leq \alpha_k \leq \overline{\alpha}$ and  $\underline{\beta} \leq \beta_k \leq \overline{\beta}$ for all $k \in \mathbb{N}$, and $\underline{w}I_{n_w} \leq K_w(t) \leq \overline{w}I_{n_w}$ for all $t \geq 0$.
	\end{assumption}
	Furthermore, let $b_2 = ||B_2'||$, $c_2 = ||C_2||$ and $d_2=\sigma_{min}(D_2')$. The first step in order to derive an upper bound for $\hat{P}_{2,k}$ is to bound the parameter $\gamma_k$, which determines the ellipsoid containing $u_2(t)$. The following lemma will state how to bound $\gamma_k$.
	\begin{lemma} \label{cor:2}
		Under Assumption \ref{ass:13}, the parameters $\gamma_k$ and $\frac{\gamma_k}{\gamma_k-1}$ are uniformly bounded, i.e., for some $\underline{\gamma}_1$, $\overline{\gamma}_1$, $\underline{\gamma}_2$ and $\overline{\gamma}_2$, $\underline{\gamma}_1 \leq \gamma_k \leq \overline{\gamma}_1$ and $\underline{\gamma}_2 \leq \frac{\gamma_k}{\gamma_k-1} \leq \overline{\gamma}_2$ are true for all $ k \in \mathbb{N}$.
	\end{lemma}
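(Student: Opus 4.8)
The plan is to reduce the entire statement to uniform two-sided bounds on $\epsilon_{1,k}$, since $\gamma_k$ enters the problem only through $\mathrm{tr}(K_w(t_k))$ and $\epsilon_{1,k}^2$. From (\ref{eq:92}) one has $\gamma_k = 1 + \sqrt{\mathrm{tr}(K_w(t_k))/(n_1\epsilon_{1,k}^2)}$, and Assumption \ref{ass:13} gives directly $n_w\underline{w} \leq \mathrm{tr}(K_w(t_k)) \leq n_w\overline{w}$. Hence it suffices to produce constants $0 < \underline{\epsilon}_1 \leq \epsilon_{1,k} \leq \overline{\epsilon}_1$ valid for all $k$: the bounds $\underline{\gamma}_1 = 1 + \sqrt{n_w\underline{w}/(n_1\overline{\epsilon}_1^2)}$ and $\overline{\gamma}_1 = 1 + \sqrt{n_w\overline{w}/(n_1\underline{\epsilon}_1^2)}$ then follow by substitution. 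The bounds on $\gamma_k/(\gamma_k-1)$ follow from the fact that $f(\gamma) = \gamma/(\gamma-1) = 1 + 1/(\gamma-1)$ is strictly decreasing on $(1,\infty)$, so $f$ maps $[\underline{\gamma}_1,\overline{\gamma}_1]$ onto $[\,\overline{\gamma}_1/(\overline{\gamma}_1-1),\,\underline{\gamma}_1/(\underline{\gamma}_1-1)\,]$, yielding $\underline{\gamma}_2 = \overline{\gamma}_1/(\overline{\gamma}_1-1)$ and $\overline{\gamma}_2 = \underline{\gamma}_1/(\underline{\gamma}_1-1)$, both finite since $\underline{\gamma}_1 > 1$.

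To bound $\epsilon_{1,k}$ I would work with the continuous function $\epsilon_1(t)$ in (\ref{eq:12}) and exploit that $E$ is Hurwitz, so there exist $M \geq 1$, $\lambda > 0$ with $\norm{e^{Et}} \leq M e^{-\lambda t}$. The upper bound is then routine: the first term of (\ref{eq:12}) is dominated by $M\norm{P_1 K_0 P_1^T}^{1/2}$, and each integral in (\ref{eq:DefPsi}) is dominated by a convergent integral of $M e^{-\lambda s}$ (the second carrying the extra decaying factor $e^{-a\tau/\epsilon}$), so $\Psi(t)$, and hence $\epsilon_1(t)$, is uniformly bounded above by some $\overline{\epsilon}_1$. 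In fact this same estimate shows $\epsilon_1$ converges as $t \to \infty$, which I will use again below.

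The delicate part, and the main obstacle, is the strictly positive lower bound. Both the initial-condition term $\norm{e^{Et}}\norm{P_1 K_0 P_1^T}^{1/2}$ and the high-gain transient term (the second integral in (\ref{eq:DefPsi}), carrying $e^{-a\tau/\epsilon}$) vanish as $t \to \infty$, so they cannot keep $\epsilon_1$ away from zero; the floor must come entirely from the persistent term $\delta\int_0^t \norm{e^{E(t-\tau)}}\,d\tau$. Substituting $s = t-\tau$, this increases to $\delta\int_0^\infty \norm{e^{Es}}\,ds$, which is finite by stability of $E$ and strictly positive once we take $\delta > 0$ (always admissible, since its defining constraint only bounds $\delta$ from below). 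Consequently $\epsilon_1(t) \to \epsilon_1^\infty = \norm{F}(n_y(l+1))^{1/2}\,\delta\int_0^\infty \norm{e^{Es}}\,ds > 0$. Since $\epsilon_1$ is continuous and strictly positive on $[0,\infty)$ (positivity at $t=0$ from $K_0 \succ 0$, and for $t>0$ from non-negativity of every constituent plus the strictly positive persistent contribution) and tends to $\epsilon_1^\infty > 0$, I would fix $T$ with $\epsilon_1(t) \geq \epsilon_1^\infty/2$ for $t \geq T$ and set $\underline{\epsilon}_1 = \min\bigl(\min_{t\in[0,T]}\epsilon_1(t),\,\epsilon_1^\infty/2\bigr) > 0$, the first minimum being attained on the compact interval. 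This gives $\epsilon_1(t) \geq \underline{\epsilon}_1$ for all $t$, in particular at each sampling instant $t_k$, and substituting $\underline{\epsilon}_1,\overline{\epsilon}_1$ together with the trace bounds into $\gamma_k$ and invoking the monotonicity of $f$ completes the proof.
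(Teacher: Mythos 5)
Your proof is correct and follows the same route the paper intends (the paper's own proof is a one-line sketch appealing to the definition of $\gamma_k$ in (\ref{eq:92}) and the uniform boundedness of $\epsilon_{1,k}$, omitted ``for space''); you supply the details it leaves out, in particular the strictly positive floor $\underline{\epsilon}_1$ that is actually what makes $\overline{\gamma}_1$ finite. One small wording slip: when $\frac{K\sqrt{l+1}}{\epsilon^l}\overline{z}_0 < \epsilon^l\delta$ the second integral in (\ref{eq:DefPsi}) enters with a negative coefficient, so not ``every constituent'' is non-negative, but your positivity conclusion survives because that term is dominated by the persistent term $\delta\int_0^t\norm{e^{E(t-\tau)}}\,d\tau$ (as $\epsilon^l e^{-a\tau/\epsilon}<1$).
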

	\begin{proof}
		The proof is based on the definition of $\gamma_k$ in (\ref{eq:92}) and the fact that $\epsilon_{1,k}$ is uniformly bounded, and it is omitted for space.
	\end{proof}
	With the aforementioned results, we are ready to state the following lemma regarding the bounds of $\hat{P}_{2,k}$. 
	\begin{lemma} \label{lem:6}
		Under Assumptions \ref{ass:2} and \ref{ass:13}, $\hat{P}_{2,k}$ satisfies
		\begin{align}
			\underline{p}_{2}I_{n_2} \preceq \hat{P}_{2,k} \preceq \overline{p}_{2,k}I_{n_2}, \forall k\geq 0,
		\end{align}
		where
		\begin{gather} \label{eq:400}
			\begin{aligned}
					\underline{p}_2 &= \left(\frac{1-\underline{\beta}}{\underline{q}}+\frac{\overline{\beta}c_2^2}{d_2^2\min(\underline{\gamma}_1 \underline{\epsilon}_1^2,\underline{\gamma}_2\underline{w})}\right)^{-1},\\
				\overline{p}_{2,k} &= \left(\frac{\overline{f}}{1-\overline{\beta}}\right)^k\overline{p}_{2,0}+\frac{\overline{q}}{1-\overline{\beta}}\sum_{i=0}^{k-1}\left(\frac{\overline{f}}{1-\overline{\beta}}\right)^i, \\
				\underline{q} &= \frac{\kappa_1\kappa_2^2\Delta t \min(\baverage{\gamma}_1\baverage[2.5]{\epsilon}_1^2,\baverage{\gamma}_2\baverage[2.5]{w})}{1-\underline{\alpha}},\overline{p}_{2,0} = \norm{P_1K_0P_1^T}, \\
				\overline{f} &= \frac{\overline{a}_2^2e^{2\overline{\lambda}_2\Delta t}}{\underline{\alpha}},\overline{q} = \frac{\Delta t \max(\average{\gamma}_1\average[2.5]{\epsilon}_1^2,\average{\gamma}_2\average[2.5]{w})\overline{a}_2^2b_2^2}{2\overline{\lambda}_2(1-\overline{\alpha})}\left(e^{2\overline{\lambda}_2\Delta t}-1\right), 
			\end{aligned}
		\end{gather}
		and $\overline{\lambda}_2$ is chosen such that $\overline{\lambda}_2 > \max{Re(\lambda(A_4))}$, and for some $\overline{a}_2,\kappa_1,\kappa_2>0$. 
	\end{lemma}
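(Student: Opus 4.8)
The plan is to treat the upper and lower bounds separately, both anchored on the inverse-form update identity implicit in (\ref{eq:17}), namely $\hat{P}_{2,k}^{-1} = (1-\beta_k)\hat{P}_{2,k|k-1}^{-1}+\beta_k C_2^T G_k^{-1}C_2$, which is the cleanest handle on the measurement step (\ref{eq:23}). Throughout I would fix $\overline{a}_2,\overline{\lambda}_2$ so that $\|e^{A_4 t}\|\le \overline{a}_2 e^{\overline{\lambda}_2 t}$, set $\kappa_2=\sigma_{min}(B_2')>0$ (well-defined by the full-row-rank assumption on $B_2'$), let $\kappa_1=\lambda_{min}\!\big(\int_0^{\Delta t} e^{A_4 s}e^{A_4^T s}\,ds\big)>0$, and invoke Assumption \ref{ass:13} together with Lemma \ref{cor:2} to replace $\alpha_k,\beta_k,\gamma_k,\tfrac{\gamma_k}{\gamma_k-1},\epsilon_{1,k}^2,K_w(t_k)$ by their uniform constant bounds. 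The sandwich $\min(\underline\gamma_1\underline\epsilon_1^2,\underline\gamma_2\underline w)\,I\preceq K_u(t)\preceq \max(\overline\gamma_1\overline\epsilon_1^2,\overline\gamma_2\overline w)\,I$ then follows immediately from the block-diagonal form of $K_u$.

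For the upper bound I would argue by induction on $k$, with base case $\hat{P}_{2,0}\preceq \overline{p}_{2,0}I_{n_2}$, $\overline{p}_{2,0}=\|P_1K_0P_1^T\|$, coming from Assumption \ref{ass:2} and the transformation in Lemma \ref{lem:1}. From the inverse identity, $\hat{P}_{2,k}^{-1}\succeq(1-\beta_k)\hat{P}_{2,k|k-1}^{-1}$, hence $\hat{P}_{2,k}\preceq\frac{1}{1-\beta_k}\hat{P}_{2,k|k-1}\preceq\frac{1}{1-\overline\beta}\hat{P}_{2,k|k-1}$. I then bound the two terms of (\ref{eq:51}): applying the inductive hypothesis $\hat{P}_{2,k-1}\preceq\overline{p}_{2,k-1}I$ \emph{first}, the propagation term satisfies $e^{A_4\Delta t}\hat{P}_{2,k-1}e^{A_4^T\Delta t}\preceq\overline{p}_{2,k-1}\,e^{A_4\Delta t}e^{A_4^T\Delta t}\preceq\overline{p}_{2,k-1}\overline{a}_2^2e^{2\overline\lambda_2\Delta t}I$, which after dividing by $\alpha_k\ge\underline\alpha$ gives $\overline{f}\,\overline{p}_{2,k-1}I$; and the $M_{2,k}$ term is bounded above by $\overline{q}I$ by factoring out $\max(\overline\gamma_1\overline\epsilon_1^2,\overline\gamma_2\overline w)$ and $\|B_2'\|^2=b_2^2$ and integrating $\overline{a}_2^2e^{2\overline\lambda_2(t_k-\tau)}$ over $[t_{k-1},t_k]$. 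This yields the scalar recursion $\overline{p}_{2,k}=(\overline{f}\,\overline{p}_{2,k-1}+\overline{q})/(1-\overline\beta)$, whose closed-form solution is exactly the stated expression.

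For the lower bound I would obtain a uniform, non-recursive estimate directly from the inverse identity. First, discarding the positive-semidefinite propagation term in (\ref{eq:51}) leaves $\hat{P}_{2,k|k-1}\succeq\frac{\Delta t}{1-\alpha_k}M_{2,k}\succeq\frac{\Delta t}{1-\underline\alpha}M_{2,k}$, and the congruence/Gramian estimate $M_{2,k}\succeq\min(\underline\gamma_1\underline\epsilon_1^2,\underline\gamma_2\underline w)\,\kappa_2^2\kappa_1\,I_{n_2}$ gives $\hat{P}_{2,k|k-1}\succeq\underline{q}\,I_{n_2}$, so $(1-\beta_k)\hat{P}_{2,k|k-1}^{-1}\preceq\frac{1-\underline\beta}{\underline{q}}I$. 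Second, since $D_2'$ has full row rank, $G_k=D_2'K_u(t_k)D_2'^T\succeq d_2^2\min(\underline\gamma_1\underline\epsilon_1^2,\underline\gamma_2\underline w)I\succ0$ is invertible and $\beta_k C_2^TG_k^{-1}C_2\preceq\frac{\overline\beta c_2^2}{d_2^2\min(\underline\gamma_1\underline\epsilon_1^2,\underline\gamma_2\underline w)}I$. Adding the two estimates yields $\hat{P}_{2,k}^{-1}\preceq\underline{p}_2^{-1}I$, i.e. $\hat{P}_{2,k}\succeq\underline{p}_2\,I_{n_2}$.

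I expect the main obstacle to be the lower bound on $M_{2,k}$: it is the only place that needs a genuinely positive-definite estimate rather than a norm bound, and it hinges on (i) the full-row-rank of $B_2'$ to propagate $B_2'K_u B_2'^T\succeq\min(\cdots)\kappa_2^2 I$ from $K_u\succeq\min(\cdots)I$, and (ii) the Gramian lower bound $\int_0^{\Delta t}e^{A_4 s}e^{A_4^T s}\,ds\succeq\kappa_1 I$ that quantifies how much one sampling interval contributes. A secondary point worth verifying is that the measurement update is never degenerate (the ``else'' branch of Algorithm~1): this is guaranteed here because Assumption~\ref{ass:13} forces $\underline w>0$, and together with full-row-rank $D_2'$ this makes $G_k\succ0$, so the performance analysis stays away from the $G_k=0$ case.
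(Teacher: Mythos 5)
Your proposal is correct and follows essentially the same route as the paper's proof: the Woodbury/inverse-form rewriting of the measurement step, an induction on $k$ giving the scalar recursion $\overline{p}_{2,k}=(\overline{f}\,\overline{p}_{2,k-1}+\overline{q})/(1-\overline{\beta})$ for the upper bound, and the non-recursive lower bound obtained from $M_{2,k}\succeq\kappa_1\kappa_2^2\min(\underline{\gamma}_1\underline{\epsilon}_1^2,\underline{\gamma}_2\underline{w})I_{n_2}$ together with the upper bound on $\beta_k C_2^TG_k^{-1}C_2$. The only differences are presentational (you discard the propagation term explicitly and spell out the full-row-rank justifications for $\kappa_2$ and $G_k\succ 0$, which the paper leaves implicit).
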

	\begin{proof}
		See Appendix \ref{app:8}.
	\end{proof}
	Note that Lemma \ref{lem:6} only proves that there exists a time-dependent upper bound for $\hat{P}_{2,k}$, but it is more important to find a uniform upper bound of $\hat{P}_{2,k}$. Obviously, the parameter $\overline{f}$, which characterizes the exponential stability of the weakly unobservable subsystem $\Sigma_2$, is a key factor in determining the uniform upper bound of $\hat{P}_{2,k}$. There are two cases corresponding to different uniform bounds of $\hat{P}_{2,k}$.
	Now, the first case that leads to a uniform bound on $\hat{P}_{2,k}$ is described in the following lemma.
	\begin{lemma} \label{lem:13}
		Under Assumptions \ref{ass:2} and \ref{ass:13}, if $\overline{f} <1-\overline{\beta}$, then $\hat{P}_{2,k}$ is uniformly bounded above, i.e. $\hat{P}_{2,k} \preceq \overline{p}_2I_{n_2}, \forall k \in \mathbb{N}$, where 
		$
			\overline{p}_2 = \frac{\overline{f}\overline{p}_{2,0}}{1-\overline{\beta}}+\frac{\overline{q}}{1-\overline{\beta}-\overline{f}} .
		$
	\end{lemma}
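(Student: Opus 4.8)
The plan is to build directly on Lemma \ref{lem:6}, which already supplies the time-dependent upper bound $\hat{P}_{2,k}\preceq\overline{p}_{2,k}I_{n_2}$, and to show that the extra hypothesis $\overline{f}<1-\overline{\beta}$ turns the scalar sequence $\overline{p}_{2,k}$ in (\ref{eq:400}) into a uniformly bounded one. The key observation is that the closed form for $\overline{p}_{2,k}$ is a geometric term $r^{k}\overline{p}_{2,0}$ plus a geometric partial sum $\frac{\overline{q}}{1-\overline{\beta}}\sum_{i=0}^{k-1}r^{i}$, both with common ratio $r:=\frac{\overline{f}}{1-\overline{\beta}}$. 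Thus the whole proof reduces to summing a geometric series, and the hypothesis serves only to force $r<1$ so that the series converges.

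Concretely, I would first check that $r\in(0,1)$: positivity holds because $\overline{f}=\frac{\overline{a}_2^{2}e^{2\overline{\lambda}_2\Delta t}}{\underline{\alpha}}>0$ and $1-\overline{\beta}>0$ (as $\overline{\beta}<1$ by Assumption \ref{ass:13}), while $r<1$ is a verbatim restatement of $\overline{f}<1-\overline{\beta}$. With $0<r<1$, I would bound the two contributions separately: $r^{k}\overline{p}_{2,0}\leq r\,\overline{p}_{2,0}=\frac{\overline{f}\overline{p}_{2,0}}{1-\overline{\beta}}$ for $k\geq 1$, and $\sum_{i=0}^{k-1}r^{i}\leq\sum_{i=0}^{\infty}r^{i}=\frac{1}{1-r}$. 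Adding these and using the identity $\frac{1}{1-r}=\frac{1-\overline{\beta}}{1-\overline{\beta}-\overline{f}}$ gives $\overline{p}_{2,k}\leq\frac{\overline{f}\overline{p}_{2,0}}{1-\overline{\beta}}+\frac{\overline{q}}{1-\overline{\beta}-\overline{f}}=\overline{p}_2$, and composing with Lemma \ref{lem:6} yields $\hat{P}_{2,k}\preceq\overline{p}_2 I_{n_2}$. Since $\overline{f},\overline{q},\overline{\beta},\overline{p}_{2,0}$ are all $k$-independent (by Assumption \ref{ass:13} and Lemma \ref{cor:2}), the constant $\overline{p}_2$ is genuinely uniform in $k$.

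The computation is routine, so the only delicate point is the bookkeeping at $k=0$, where the partial sum in (\ref{eq:400}) is empty and $r^{0}=1$, so the damping estimate $r^{k}\overline{p}_{2,0}\leq r\,\overline{p}_{2,0}$ used above holds only for $k\geq 1$. I would therefore handle $k=0$ as a base case, in which Assumption \ref{ass:2} gives $\hat{P}_{2,0}\preceq\overline{p}_{2,0}I_{n_2}$ directly; a strictly uniform bound valid for every index is then $\max(\overline{p}_{2,0},\overline{p}_2)$, while for all $k\geq 1$ the steady-state bound $\overline{p}_2$ applies without modification. This accounts for all $k\in\mathbb{N}$ and establishes the claimed uniform upper bound.
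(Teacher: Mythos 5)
Your proof is correct and takes essentially the same route as the paper: both arguments start from the closed form of $\overline{p}_{2,k}$ in Lemma \ref{lem:6}, use $\overline{f}<1-\overline{\beta}$ to make the common ratio $\overline{f}/(1-\overline{\beta})$ less than one, bound the finite geometric sum by the infinite series, and bound the transient term $(\overline{f}/(1-\overline{\beta}))^{k}\overline{p}_{2,0}$ by its $k=1$ value. Your explicit handling of the $k=0$ base case is a minor extra precaution the paper's one-line proof omits, but it does not change the argument.
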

	\begin{proof}
		Based on Lemma \ref{lem:6}, one has
		\begin{equation}
				\overline{p}_{2,k} \leq \left(\frac{\overline{f}}{1-\overline{\beta}}\right)^k\overline{p}_{2,0}+\frac{\overline{q}}{1-\overline{\beta}}\sum_{i=0}^{\infty}\left(\frac{\overline{f}}{1-\overline{\beta}}\right)^i \leq \overline{p}_2
		\end{equation} 
		and the proof is completed.
	\end{proof} 
	\begin{remark} \label{rem:3}
		After some manipulations, we see that Lemma \ref{lem:13} can be implemented if 
		\begin{equation} \label{eq:90}
			\overline{\lambda}_2 < \frac{ln(1-\overline{\beta}) + ln(\underline{\alpha})-2ln(\overline{a}_2)}{2\Delta t}
		\end{equation}
		is satisfied. Since in general the parameter $\overline{\beta}$ and $\underline{\alpha}$ are approximately equal to 1 and 0, respectively, we conclude that $A_4$ needs to be sufficiently stable, i.e., $\max Re(\lambda(A_4)) < - \lambda^*$ for some $\lambda^* > 0$, in order to use Lemma \ref{lem:13}.  
	\end{remark}
			\begin{figure*}[b!]
		\centering
		\hrule
		\subfloat[$x_1$]{\includegraphics[clip,trim = 2.8cm 11.2cm 6.8cm 6.5cm,width=\columnwidth-5.25cm]{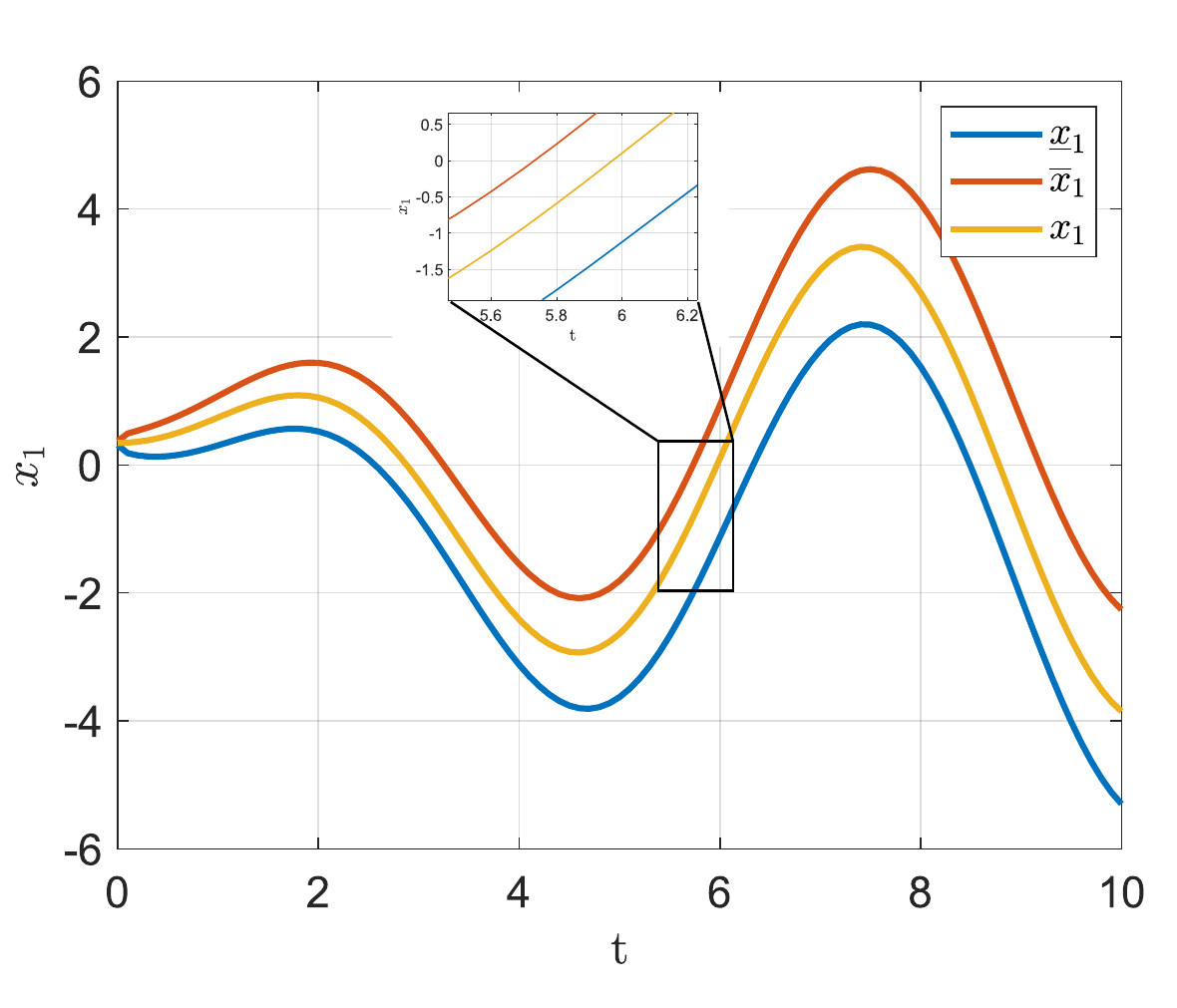}}
		\subfloat[$x_2$]{\includegraphics[clip,trim = 5cm 12.1cm 4.4cm 6.3cm,width=\columnwidth-5.25cm]{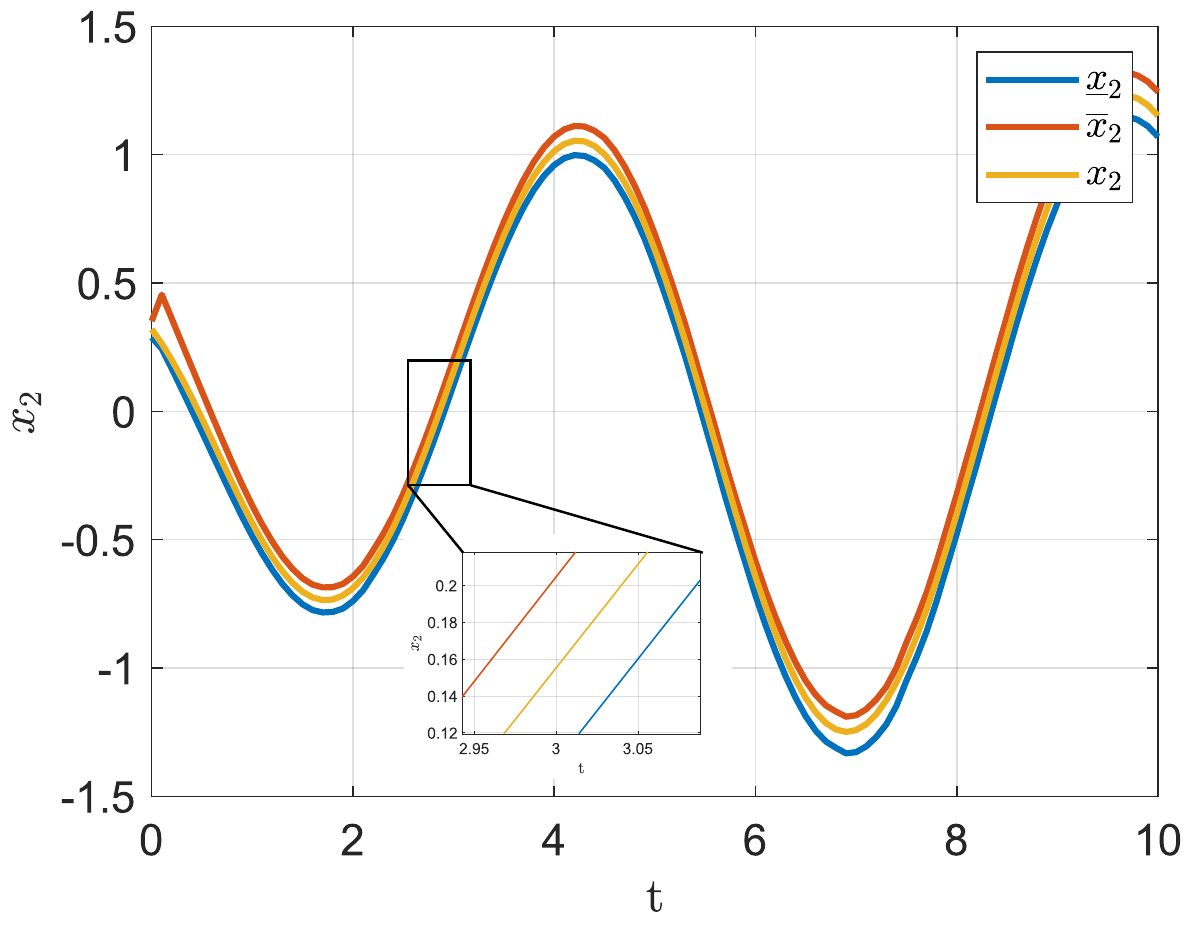}}
		\subfloat[$x_3$]{\includegraphics[clip,trim = 3.9cm 12.1cm 5.9cm 6.5cm,width=\columnwidth-5.25cm]{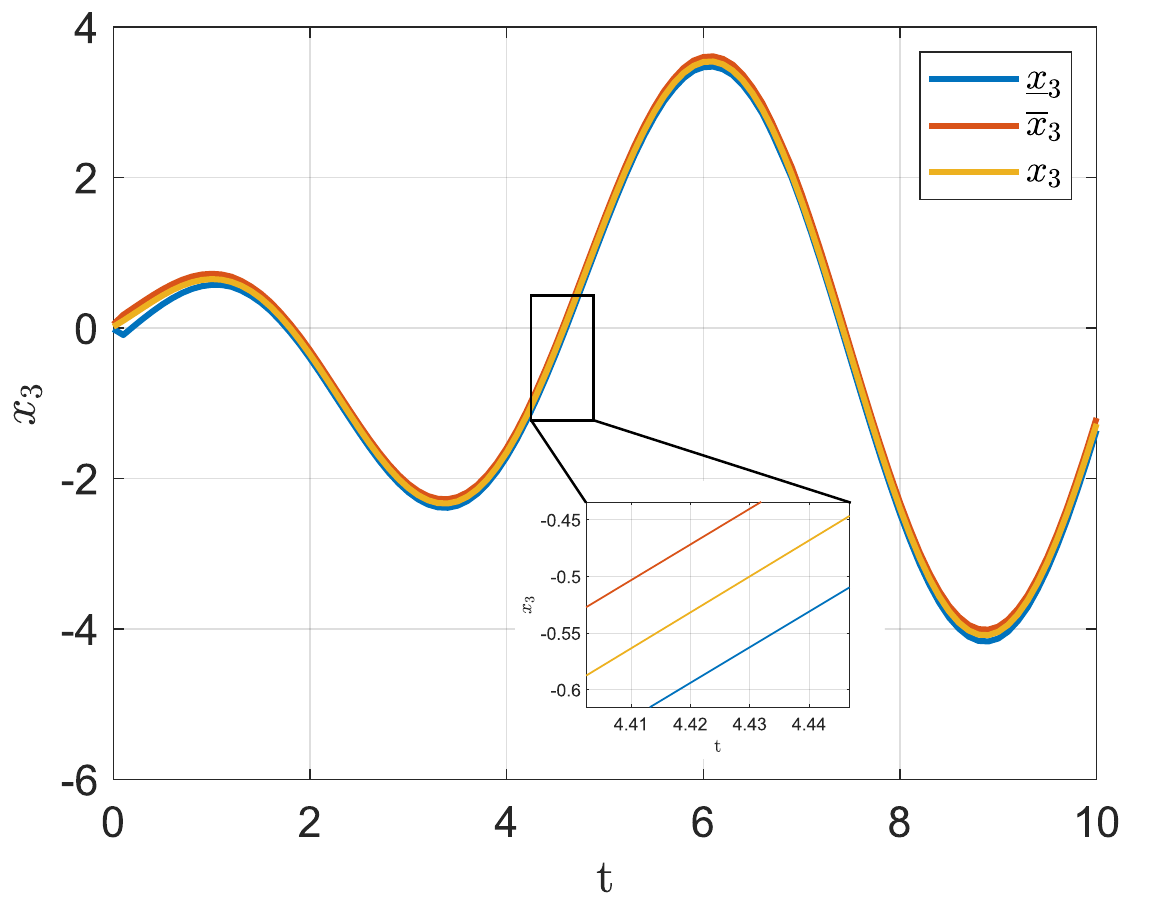}}
		\subfloat[$x_4$]{\includegraphics[clip,trim = 3.8cm 12.4cm 5.8cm 6.2cm,width=\columnwidth-5.25cm]{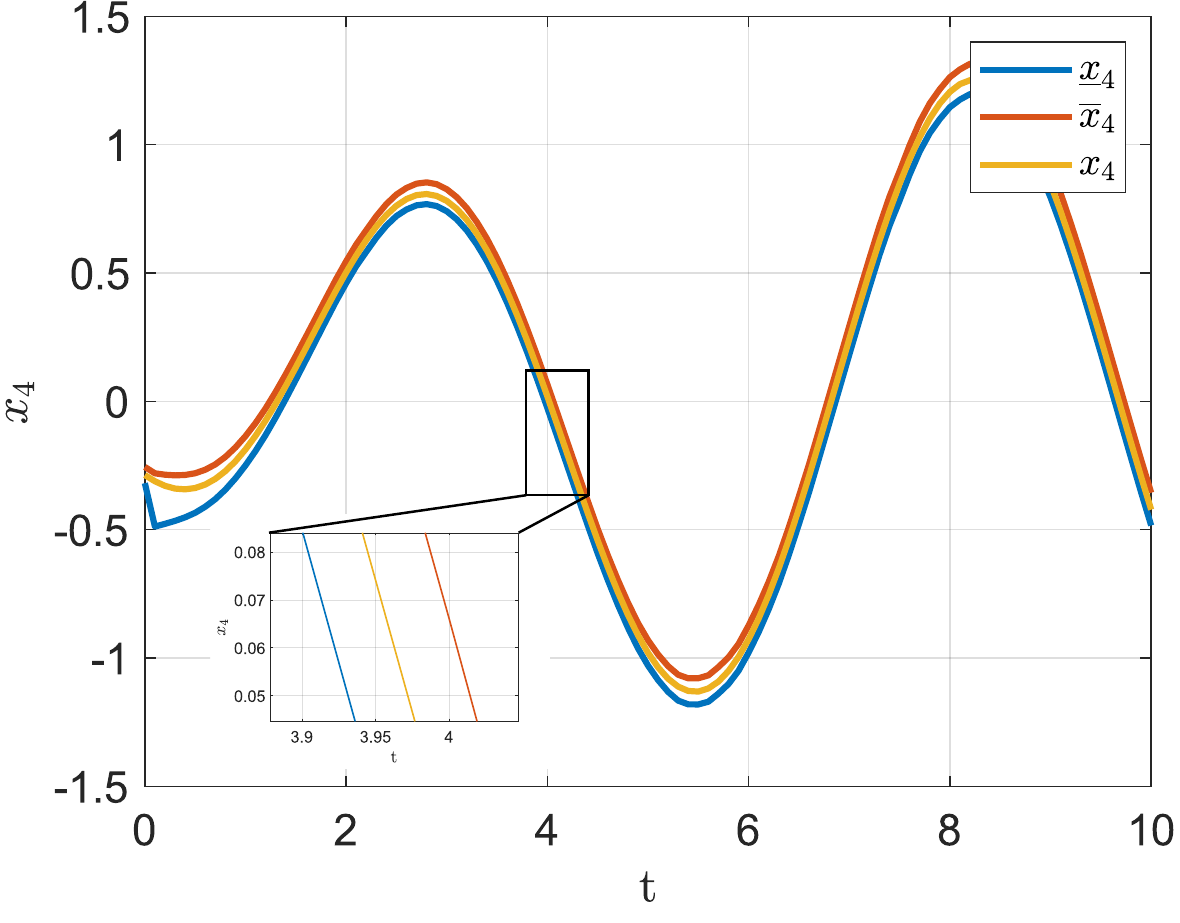}}
		\subfloat[$x_5$]{\includegraphics[clip,trim = 3.8cm 11.8cm 5.5cm 6.5cm,width=\columnwidth-5.25cm]{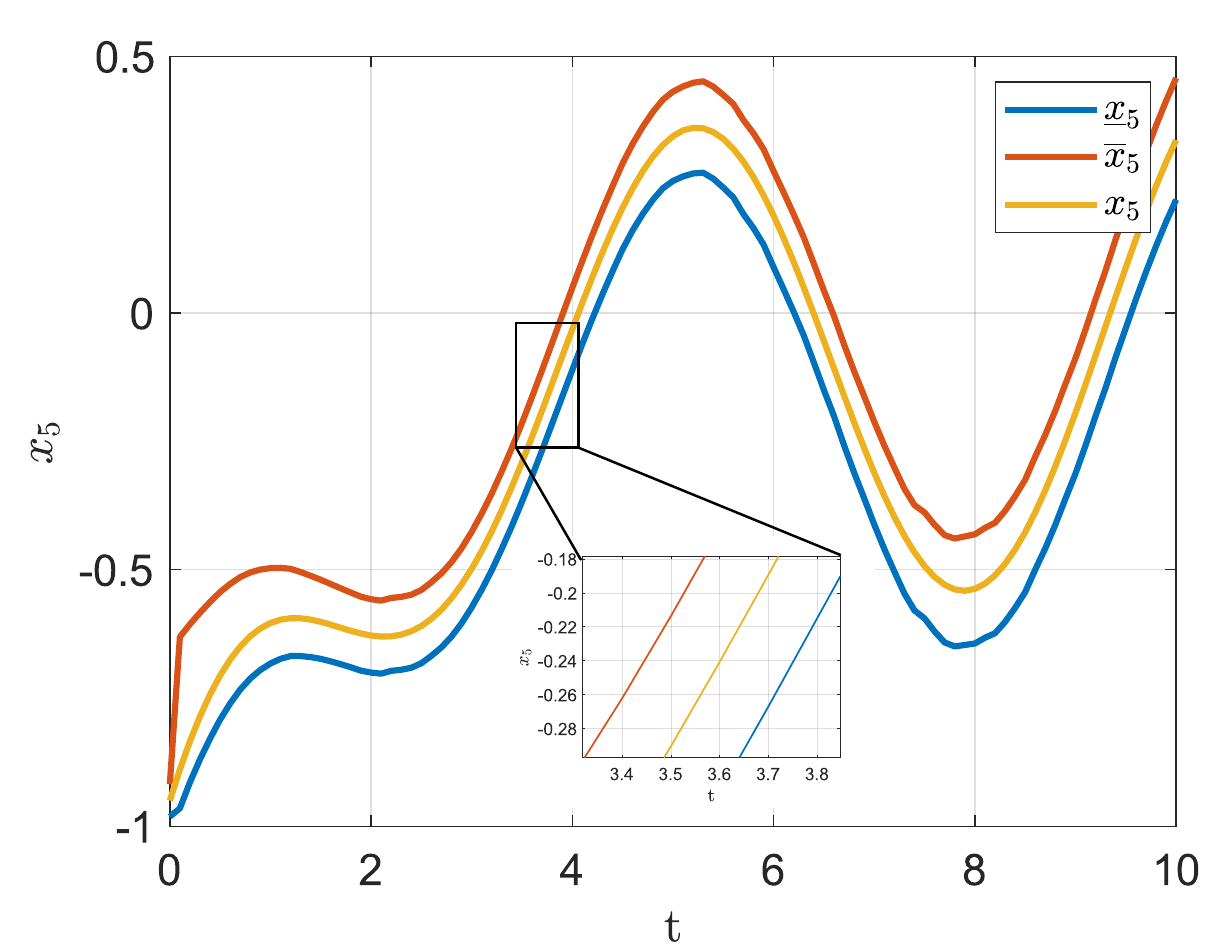}}
		\caption{Volume and bounds of ellipsoidal estimate for Example 1}
		\label{fig:ex1}
	\end{figure*}
	
	In order to prove the other case, the following assumption needs to be made about subsystem $\Sigma_2$. 
	\begin{assumption}\label{ass:14}
		There exist an integer $r \in \mathbb{N}$ and a constant $\underline{\rho}$ with $0 < \underline{\rho} < \infty$ such that
		\begin{equation}
			\underline{\rho}I_{n_2} \preceq \sum_{i=k-r}^k
			e^{-(k-i)A_4^T\Delta t}C_2^TG_{i}^{-1}C_2e^{-(k-i)A_4\Delta t}
		\end{equation}
		for all $k \geq r$.
	\end{assumption}

	\begin{remark}
		This assumption is similar to uniform observability \cite{c57}, but it is more relaxed. Indeed, our analysis shows that we only need to know the lower bound on the observability Grammian, which is not the case in \cite{c20} and \cite{c21}, where assumptions on uniform controllability and observability as well as full knowledge on the bounds of the observability Grammian are required. Thus, more flexibilities are brought with our proposed algorithm. Additionally, if $\Sigma_2$ is observable, Assumption \ref{ass:14} is trivially satisfied. 
	\end{remark}

	Finally, the following lemma illustrates how to find a uniform upper bound of $\hat{P}_{2,k}$ for the other case.
	\begin{lemma} \label{lem:12}
		Under Assumptions \ref{ass:2}, \ref{ass:13} and \ref{ass:14}, if $\overline{f} \geq 1-\overline{\beta}$, then $\hat{P}_{2,k}$ is uniformly bounded above, i.e., for all $k \in \mathbb{N}$, $ \hat{P}_{2,k} \preceq \overline{p}_2I_{n_2}$, where 
		\begin{equation} \label{eq:405}
			\begin{aligned}
				\overline{p}_2 &= \max_{1 \leq k \leq r}\left(\overline{p}_{2,k}, \frac{1}{\underline{\beta}((1-\overline{\beta})\varphi)^r\underline{\rho}}\right),\\
				\varphi &= \left(1+\frac{\underline{a}_2^2\average{q}e^{2\underline{\lambda}_2\Delta t}\average[2.5]{\alpha}}{\baverage{p}_2}\right)^{-1}\underline{\alpha},
			\end{aligned}			
		\end{equation} where $\underline{\lambda}_2$ is chosen such that $\underline{\lambda}_2 > \max Re(\lambda(-A_4))$, and for some $\underline{a}_2>0$.
	\end{lemma}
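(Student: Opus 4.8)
The plan is to pass to the information (inverse-covariance) form of the observer and to show that, over any window of $r$ steps, Assumption \ref{ass:14} injects enough ``information'' to keep $\hat{P}_{2,k}^{-1}$ bounded away from zero, which is precisely an upper bound on $\hat{P}_{2,k}$. Rewriting the measurement update (\ref{eq:23}) through the constraint in (\ref{eq:17}) (equivalently, via the matrix inversion lemma) yields the information recursion
\begin{equation}
\hat{P}_{2,k}^{-1} = (1-\beta_k)\hat{P}_{2,k|k-1}^{-1} + \beta_k C_2^T G_k^{-1}C_2,
\end{equation}
so any lower bound on $\hat{P}_{2,k}^{-1}$ converts directly into the sought upper bound on $\hat{P}_{2,k}$.

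Next I would establish a one-step lower bound on the predicted information matrix, $\hat{P}_{2,k|k-1}^{-1}\succeq \varphi\, e^{-A_4^T\Delta t}\hat{P}_{2,k-1}^{-1}e^{-A_4\Delta t}$, with $\varphi$ as in (\ref{eq:405}). Equivalently, I would upper bound the predicted covariance by $\hat{P}_{2,k|k-1}\preceq \varphi^{-1}e^{A_4\Delta t}\hat{P}_{2,k-1}e^{A_4^T\Delta t}$. In (\ref{eq:51}) the matrix $\hat{P}_{2,k|k-1}$ is the sum of a propagated term $\alpha_k^{-1}e^{A_4\Delta t}\hat{P}_{2,k-1}e^{A_4^T\Delta t}$ and a process-noise term $(1-\alpha_k)^{-1}\Delta t\, M_{2,k}$; the latter is bounded above by $\overline{q}I_{n_2}$ exactly as in Lemma \ref{lem:6}, while the former is bounded below using $\hat{P}_{2,k-1}\succeq \underline{p}_2 I_{n_2}$ (Lemma \ref{lem:6}), $\|e^{-A_4 t}\|\leq \underline{a}_2 e^{\underline{\lambda}_2 t}$, and $\alpha_k\in[\underline{\alpha},\overline{\alpha}]$ (Assumption \ref{ass:13}). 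Comparing the process-noise term to the propagated term converts the additive structure into the multiplicative factor $\varphi=(1+\underline{a}_2^2\overline{q}e^{2\underline{\lambda}_2\Delta t}\overline{\alpha}/\underline{p}_2)^{-1}\underline{\alpha}$.

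Chaining this with the information recursion and using $\beta_k\in[\underline{\beta},\overline{\beta}]$ gives $\hat{P}_{2,k}^{-1}\succeq (1-\overline{\beta})\varphi\, e^{-A_4^T\Delta t}\hat{P}_{2,k-1}^{-1}e^{-A_4\Delta t} + \underline{\beta}\,C_2^TG_k^{-1}C_2$. Unrolling this inequality $r$ times and lower-bounding the surviving propagated term by $\hat{P}_{2,k-r}^{-1}\succeq \underline{\beta}\,C_2^TG_{k-r}^{-1}C_2$ (again from the information recursion, discarding its own predicted part) produces all $r+1$ weighted observability terms $\sum_{j=0}^{r}\big((1-\overline{\beta})\varphi\big)^{j}e^{-jA_4^T\Delta t}C_2^TG_{k-j}^{-1}C_2\, e^{-jA_4\Delta t}$. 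Since $(1-\overline{\beta})\varphi<1$, factoring out the smallest weight $\big((1-\overline{\beta})\varphi\big)^r$ and applying Assumption \ref{ass:14} yields $\hat{P}_{2,k}^{-1}\succeq \underline{\beta}\big((1-\overline{\beta})\varphi\big)^r\underline{\rho}\,I_{n_2}$, hence $\hat{P}_{2,k}\preceq \big(\underline{\beta}((1-\overline{\beta})\varphi)^r\underline{\rho}\big)^{-1}I_{n_2}$ for every $k\geq r$. For the initial steps $1\leq k\leq r$ the window is not yet full, so I would simply retain the finite time-dependent bounds $\overline{p}_{2,k}$ of Lemma \ref{lem:6}; taking the maximum of these with the uniform bound reproduces $\overline{p}_2$ in (\ref{eq:405}).

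The main obstacle is the second step, obtaining the predicted-information lower bound with the precise factor $\varphi$. Because $\hat{P}_{2,k|k-1}$ is a sum of two positive semidefinite matrices, its inverse admits no clean lower bound in general; the crucial device is to use the uniform lower bound $\underline{p}_2 I_{n_2}\preceq \hat{P}_{2,k-1}$ to dominate the process-noise term $\Delta t\, M_{2,k}/(1-\alpha_k)$ by a scalar multiple of the propagated term $e^{A_4\Delta t}\hat{P}_{2,k-1}e^{A_4^T\Delta t}$, collapsing the sum into a single scaled propagated covariance that can be inverted directly. A secondary technical point is that $A_4$ need not be stable in this case (it is precisely the regime $\overline{f}\geq 1-\overline{\beta}$), so the bound on $e^{-A_4 t}$ must be taken with $\underline{\lambda}_2>\max Re(\lambda(-A_4))$ rather than with a decaying exponential.
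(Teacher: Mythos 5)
Your proposal follows essentially the same route as the paper's proof: pass to the information form $\hat{P}_{2,k}^{-1} = (1-\beta_k)\hat{P}_{2,k|k-1}^{-1}+\beta_k C_2^TG_k^{-1}C_2$, establish the one-step bound $\hat{P}_{2,k}^{-1}\succeq(1-\overline{\beta})\varphi\,e^{-A_4^T\Delta t}\hat{P}_{2,k-1}^{-1}e^{-A_4\Delta t}+\underline{\beta}C_2^TG_k^{-1}C_2$ (the paper cites an external lemma for this; you correctly reconstruct it by dominating the process-noise term with a scalar multiple of the propagated term via $\hat{P}_{2,k-1}\succeq\underline{p}_2I_{n_2}$, recovering the exact $\varphi$), unroll over a window of length $r$, factor out the smallest weight $((1-\overline{\beta})\varphi)^r$, apply Assumption \ref{ass:14}, and cover $1\leq k\leq r$ with the bounds of Lemma \ref{lem:6}. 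The only cosmetic difference is that you obtain the $i=k-r$ observability term by one extra application of the information recursion rather than by unrolling $r+1$ times and discarding the propagated remainder, which is equivalent.
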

	\begin{proof}
		See Appendix \ref{app:12}.
	\end{proof}
	\begin{remark}
		Depending on the value of $\overline{f}$, one can use either Lemma \ref{lem:13} or Lemma \ref{lem:12}. If $\overline{f}$ satisfies the condition given in Lemma \ref{lem:13}, we do not require Assumption \ref{ass:14} to be satisfied, while the methods in \cite{c19} and \cite{c20} require the uniform observability assumption at all given conditions to ensure that $\hat{P}_{2,k}$ is uniformly bounded. Hence, our proposed algorithm provides an alternative in case the condition of observability is not satisfied for subsystem $\Sigma_2$. 
%
	\end{remark}
	
	The previous corollaries and lemmas clearly show that both ellipsoids $\mathcal{E}(\hat{x}_{1,k},\epsilon_{1,k}^2 I_{n_1})$ and $\mathcal{E}(\hat{x}_{2,k},\hat{P}_{2,k})$ are uniformly bounded. Therefore, the uniform bound on $\hat{P}_k$ can be concluded through the following theorem.
	\begin{theorem} \label{thm:2}
		Under Assumptions \ref{ass:2}, \ref{ass:10}, \ref{ass:13} and \ref{ass:14}, $\hat{P}_k$ is uniformly bounded above and below, i.e., there are positive definite matrices $\underline{P}, \overline{P}$ such that $\underline{P} \preceq \hat{P}_k \preceq \overline{P}, \forall k \in \mathbb{N}$.
	\end{theorem}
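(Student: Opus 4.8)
The plan is to assemble the uniform two-sided bound on $\hat{P}_k$ directly from its closed form in Theorem~\ref{th:2}, namely $\hat{P}_k = P_1^{-1}D_kP_1^{-T}$ with the block-diagonal matrix $D_k = \mathrm{diag}(\mu_k\epsilon_{1,k}^2 I_{n_1},\frac{\mu_k}{\mu_k-1}\hat{P}_{2,k})$. Since $P_1$ is a fixed invertible transformation, congruence preserves the Loewner order: if I can produce constant positive definite matrices $\underline{D}\preceq D_k\preceq\overline{D}$ for all $k$, then $\underline{P}:=P_1^{-1}\underline{D}P_1^{-T}$ and $\overline{P}:=P_1^{-1}\overline{D}P_1^{-T}$ are positive definite and satisfy $\underline{P}\preceq\hat{P}_k\preceq\overline{P}$, because $P_1^{-1}(D_k-\underline{D})P_1^{-T}\succeq0$ whenever $D_k-\underline{D}\succeq0$, and similarly for the upper bound. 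Hence the task reduces to bounding each diagonal block above and below uniformly in $k$.

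For the second block, Lemma~\ref{lem:6} already supplies the uniform lower bound $\underline{p}_2 I_{n_2}\preceq\hat{P}_{2,k}$, while a uniform upper bound $\hat{P}_{2,k}\preceq\overline{p}_2 I_{n_2}$ is furnished by either Lemma~\ref{lem:13} (when $\overline{f}<1-\overline{\beta}$) or Lemma~\ref{lem:12} (when $\overline{f}\geq1-\overline{\beta}$, additionally invoking Assumption~\ref{ass:14}); in either regime a single constant $\overline{p}_2$ exists. For the first block I would first record that, by the stability of $E$ together with the smoothness and derivative bounds of Assumption~\ref{ass:10}, the scalar $\epsilon_{1,k}$ from (\ref{eq:12}) admits uniform two-sided bounds $\underline{\epsilon}_1\leq\epsilon_{1,k}\leq\overline{\epsilon}_1$ with $\underline{\epsilon}_1>0$; the strictly positive lower bound comes from the persistent term $\delta\int_0^t\norm{e^{E(t-\tau)}}d\tau$ in $\Psi(t)$, which converges to a strictly positive constant as $t\to\infty$ since $E$ is Hurwitz, while the transient term carrying $e^{-a\tau/\epsilon}$ decays.

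Next I would bound $\mu_k$ exactly as $\gamma_k$ is handled in Lemma~\ref{cor:2}. Writing $\mu_k = \sqrt{\mathrm{tr}(\hat{P}_{2,k})/(n_1\epsilon_{1,k}^2)}+1$ and substituting $n_2\underline{p}_2\leq\mathrm{tr}(\hat{P}_{2,k})\leq n_2\overline{p}_2$ together with $\underline{\epsilon}_1^2\leq\epsilon_{1,k}^2\leq\overline{\epsilon}_1^2$ yields constants $\underline{\mu},\overline{\mu}$ with $1<\underline{\mu}\leq\mu_k\leq\overline{\mu}$; since $\mu_k$ is bounded away from $1$, the map $\mu\mapsto\mu/(\mu-1)$ is continuous and decreasing on $(1,\infty)$, so $\mu_k/(\mu_k-1)$ lies in a fixed interval $[\underline{\nu},\overline{\nu}]$ with $\underline{\nu}=\overline{\mu}/(\overline{\mu}-1)>0$. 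Combining these bounds gives $\underline{D}=\mathrm{diag}(\underline{\mu}\,\underline{\epsilon}_1^2 I_{n_1},\underline{\nu}\,\underline{p}_2 I_{n_2})$ and $\overline{D}=\mathrm{diag}(\overline{\mu}\,\overline{\epsilon}_1^2 I_{n_1},\overline{\nu}\,\overline{p}_2 I_{n_2})$, both positive definite, and the congruence step above closes the argument.

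The main obstacle is not any single inequality but the bookkeeping of the lower bound: one must verify that the $\Sigma_1$ block does not collapse, i.e. that $\epsilon_{1,k}$ stays bounded away from zero via the positivity of the persistent integral term in (\ref{eq:12}), and that the uniform upper bound on $\hat{P}_{2,k}$ is correctly marshalled by splitting into the two regimes of $\overline{f}$ so that Assumption~\ref{ass:14} is invoked only where it is needed. The remainder of the argument is essentially an assembly of the preceding lemmas through the congruence transformation.
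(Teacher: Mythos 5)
Your proposal is correct and follows essentially the same route as the paper's proof: bound $\mu_k$ and $\mu_k/(\mu_k-1)$ using the uniform bounds on $\epsilon_{1,k}$ and on $\hat{P}_{2,k}$ from Lemmas \ref{lem:6}, \ref{lem:13} and \ref{lem:12}, then assemble $\underline{P}$ and $\overline{P}$ as block-diagonal matrices conjugated by $P_1^{-1}$. You simply make explicit some steps the paper leaves implicit (the congruence-preservation of the Loewner order and the positivity of the lower bound on $\epsilon_{1,k}$), which is fine.
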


	\begin{proof}
		First, we have to find the bounds on the parameter $\mu_k$. Based on the fact that $\epsilon_{1,k}$ is uniformly bounded and Lemma \ref{lem:13} (if $\overline{f} < 1-\overline{\beta}$) or Lemma \ref{lem:12} (if $\overline{f} \geq 1-\overline{\beta}$), there exist positive constants $\underline{\mu}_1$, $\overline{\mu}_1$, $\underline{\mu}_2$ and $\overline{\mu}_2$ such that $\underline{\mu}_1 \leq \mu_k \leq \overline{\mu}_1$ and $\underline{\mu}_2 \leq \frac{\mu_k}{\mu_k-1} \leq \overline{\mu}_2$, $\forall k \in \mathbb{N}$.
		Therefore, it is clear that $\hat{P}_k$ is uniformly bounded below and above by 
		$
			\underline{P} = P_1^{-1}\mathrm{diag}(
				\underline{\mu}_1\underline{\epsilon}_1^2I_{n_1},\underline{\mu}_2\underline{p}_2I_{n_2}
			)P_1^{-T}
		$ and
		$
			\overline{P} = P_1^{-1}\mathrm{diag}(
				\overline{\mu}_1\overline{\epsilon}_1^2I_{n_1},\overline{\mu}_2\overline{p}_2I_{n_2}
			)P_1^{-T},
		$
	respectively, and the proof is completed.
	\end{proof}
	\begin{remark} \label{rem:6}
		Note that there are some scenarios in which we can find a uniform bound for $\hat{P}_k$ using our proposed algorithm, while we cannot find such bound using the algorithms in \cite{c20} and \cite{c21}.  
		First, let us assume the system $\Sigma$ is unstable. Then, if we apply the algorithms in \cite{c20} and \cite{c21}, the boundedness of $\hat{P}_k$ is not guaranteed. Nevertheless, if the system $\Sigma$ is still unstable, but the weakly unobservable subsystem $\Sigma_2$ is sufficiently stable based on Remark \ref{rem:3}, then our proposed algorithm provides a bounded set estimate, and an illustrative example of this scenario will be given in Section \ref{sec:uns}.
	\end{remark}

	\section{Numerical Simulations}
		In this section, we demonstrate our algorithm's performance via two illustrative numerical examples. The first example shows the application of out set-membership observer to an aircraft, while the second one is mainly to illustrate our claim in Remark \ref{rem:6}.
	\subsection{Example 1: Aircraft Dynamics}
	First, a 5-th order linear system representing the lateral axis model of an L-1011 fixed-wing aircraft is borrowed from \cite{c47}. The system dynamics is given as $\Sigma = (A,B,C,D)$ with
	\begin{gather*}
		\begin{aligned}
			\allowdisplaybreaks
			A &= \begin{bsmallmatrix}
				0&0&1&0&0\\0&-0.154&-0.0042&1.54&0\\0&0.2490&-1&-5.2&0\\0.0386&-0.996&-0.003&-0.117&0\\0&0.5000&0&0&-0.5
			\end{bsmallmatrix},
			&C &= \begin{bsmallmatrix}
				0 &1& 0 &0 &-1\\0&0&1&0&0\\0&0&0&1&0\\0&0&0&0&0
			\end{bsmallmatrix},\\
			B &= \begin{bsmallmatrix}
				0&0\\-0.7440&-0.0320\\0.3370&-1.1200\\0.0200&0\\0&0
			\end{bsmallmatrix},
			&D &= \begin{bsmallmatrix}
				1&1\\1&1\\1&1\\1&1
			\end{bsmallmatrix}.
			\allowdisplaybreaks
		\end{aligned}
	\end{gather*}
Additionally, the inputs in our example are assumed to be UBB, namely $w(t)$ is bounded by $\mathcal{E}(c_w(t),K_w(t))$ with $c_w(t) = col(
	0.5sin(t),0.4cos(t))$ and $K_w(t) = diag(3,5)$, and the initial condition is bounded by $\mathcal{E}(\hat{x}_0,K_0)$ with $\hat{x}_0 = col(
	0.342,0.32,0.0178,-0.287,-0.9497
	)$ and $K_0 = 0.001I_5$.  
		
	In the numerical experiment, the unknown input $w(t)$ is simulated as $col(0.8sin(t),0.7cos(t))$, and the step size $\Delta t$ is chosen to be 0.1. Here, we are ready to use Algorithm 1 to obtain a robust state estimation set for system $\Sigma$. After implementing Algorithm 1,
	\begin{figure}[h!]
		\centering
				\includegraphics[clip,trim = 3.5cm 8cm 3.5cm 8.5cm,width=\columnwidth-2.5cm]{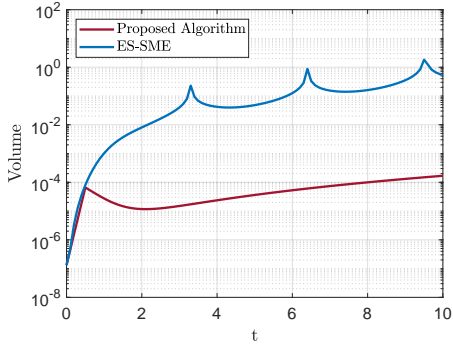}
		\caption{Volume of bounding ellipsoids for Example 1}
		\label{fig:vol}
	\end{figure}
the result we get is compared with the ES-SME algorithm developed by Liu et al. in \cite{c21}, where the authors showed that their approach performs better than other set-membership algorithms, such as AESMF \cite{c48}, BA-SME \cite{c20}, and Kalman filter. 
	
	Figure \ref{fig:vol} shows the evolution of the volume of the bounding ellipsoids computed with both algorithms over time.
	It is clear that our proposed algorithm outperforms the ES-SME algorithm in which the estimate resulted from the ES-SME algorithm diverges.  Furthermore, we are able to show that our set estimate bounds the true state at all time as illustrated in Figure \ref{fig:ex1}(a)-(e). In Figure \ref{fig:ex1}(a)-(e), $\overline{x}_i$ and $\underline{x}_i$ denote the estimated upper and lower bound of the system states provided by Algorithm 1, respectively.

		\subsection{Example 2: Unstable System} \label{sec:uns}
	In this example, we would like to validate the claim we made in Remark \ref{rem:6} by creating an illustrative system such that the system $\Sigma$ is unstable but the weakly unobservable subystem is stable. The system dynamics is given as $\Sigma=(A,B,C,D)$ with
	 
		\begin{gather*}
		A = \begin{bsmallmatrix}
			2 &1 &1\\0& -17& 0\\0 &0 &-20
		\end{bsmallmatrix},
		B = \begin{bsmallmatrix}
			1&1\\0&1\\1&1
		\end{bsmallmatrix},
		C = \begin{bsmallmatrix}
			1&0&0
		\end{bsmallmatrix},
		D = 0.
	\end{gather*}
	The ellipsoid $\mathcal{E}(c_w(t),K_w(t))$ is the same as in Example 1, and the ellipsoid $\mathcal{E}(\hat{x}_0,K_0)$ is described by $\hat{x}_0 = col(0.03,0.03,0.03)$ and $K_0 = diag(0.01,0.01,0.01)$. For this example, the system $\Sigma$ does not need to be transformed,
 i.e., $P_1 = I_3$.
		 As mentioned previously, the strongly observable subsystem is unstable with the eigenvalue of $2$, while the weakly unobservable subsystem is stable with the eigenvalues of $-17$ and $-20$. 
		For this example, the parameters $\overline{\alpha}$, $\underline{\alpha}$, $\overline{\beta}$ and $\underline{\beta}$ are found to be 0.9 , 0.1, 0 and 0, respectively ($\beta_k$ is always zero since measurement step is not used in this case). Therefore, by Lemma \ref{lem:13}, we can conclude that the set estimate is uniformly bounded based on Theorem \ref{thm:2}.
	\renewcommand{\thefigure}{5}
	\begin{figure}[h!]
		\centering
		\subfloat[Volume]{	\includegraphics[clip,trim = 3.5cm 8cm 4cm 9cm,width=\columnwidth/2-0.2cm]{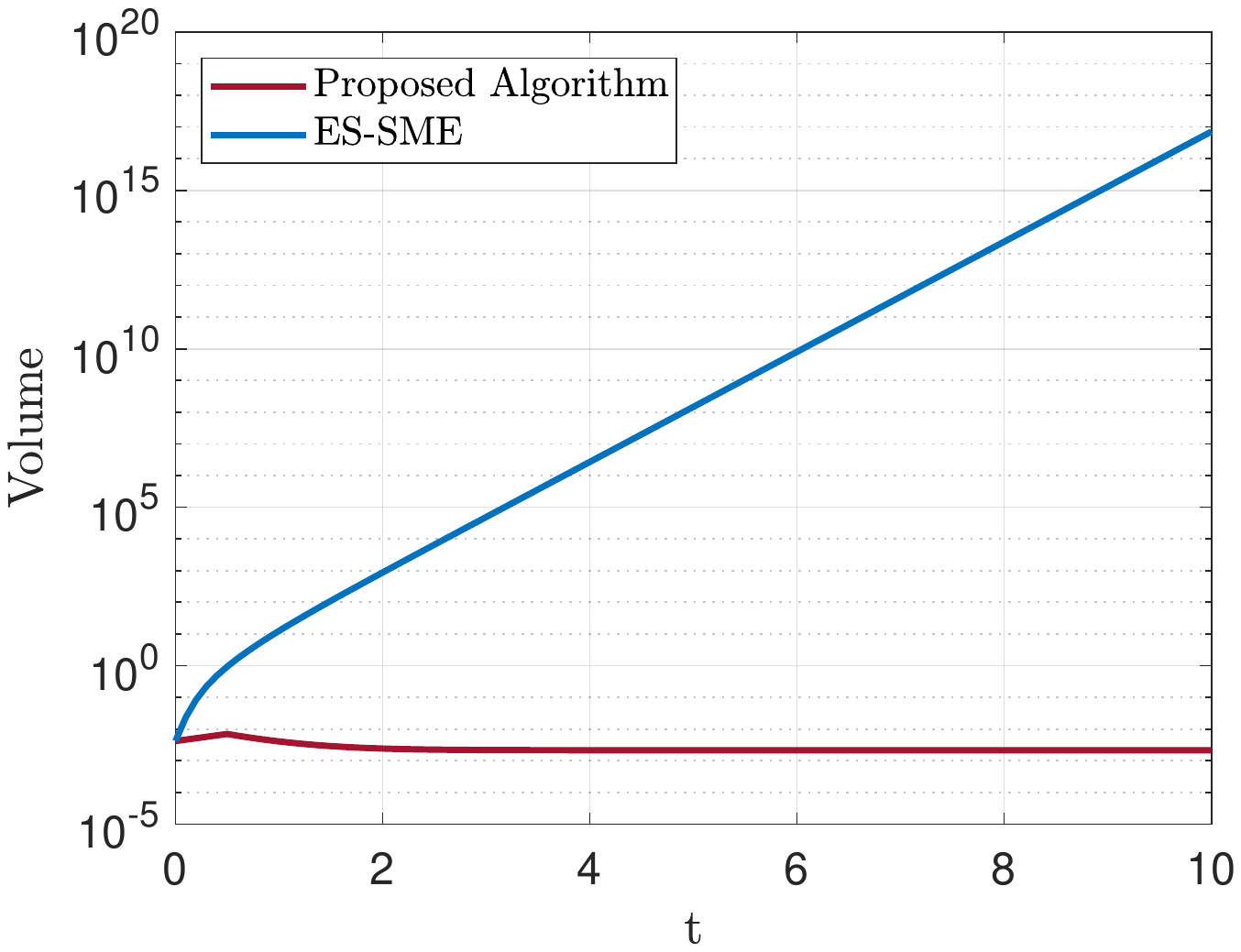} 
			\label{fig:vol2}}
		\subfloat[Projected bounding ellipsoids on $x_2$ and $x_3$]{\includegraphics[clip,trim =3.5cm 8cm 3.8cm 8.9cm,width=\columnwidth/2-0.2cm]{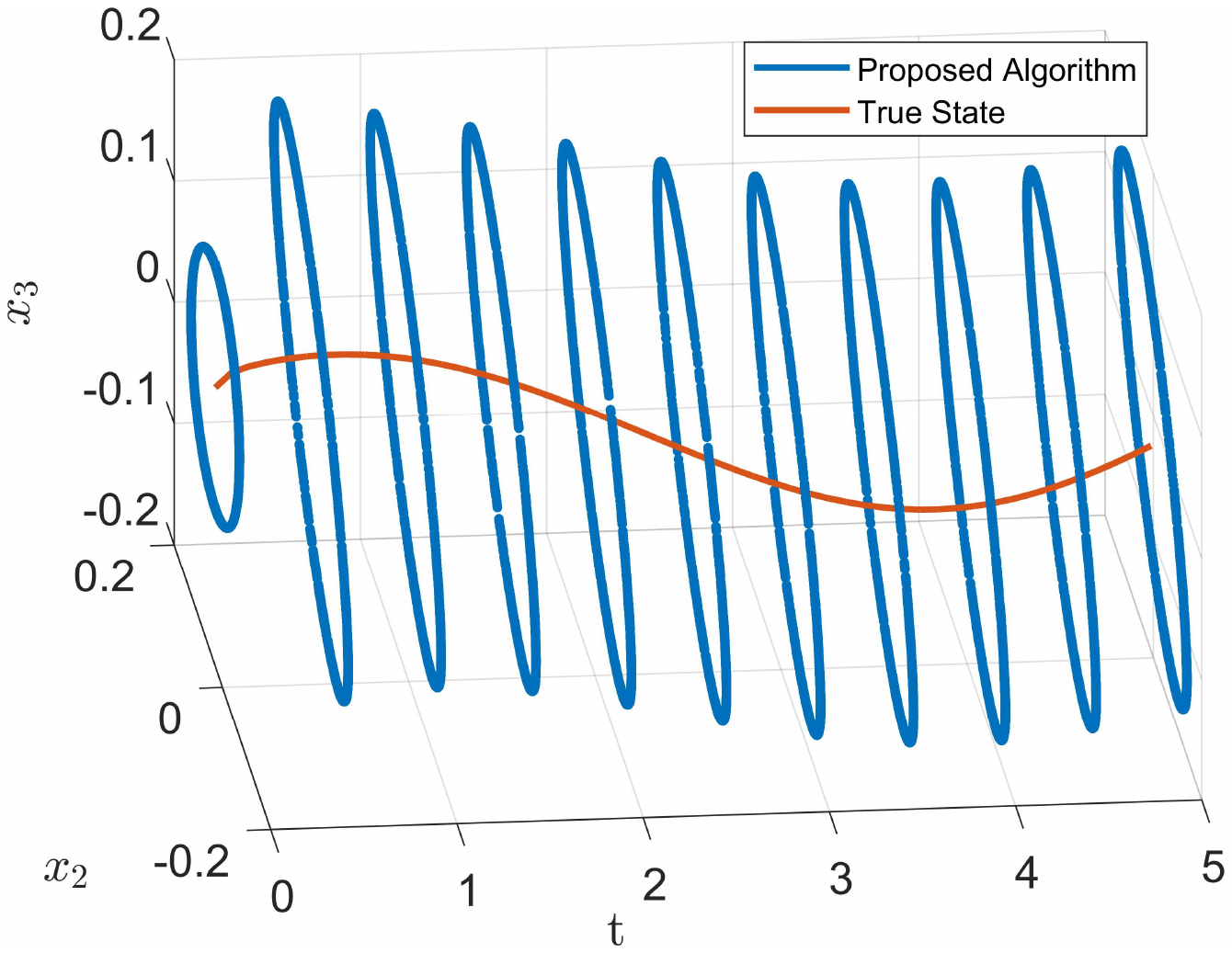}
			\label{fig:x2x3}}
		\caption{Volume and bounding ellipsoids for Example 2}
		\label{fig:ex2}
	\end{figure}
	
	Figure \ref{fig:ex2}(a) presents the volumes of the ellipsoids computed with both proposed and ES-SME algorithm, while Figure \ref{fig:ex2}(b) shows the projected ellipsoids in the $x_2$-$x_3$ plane containing the true state trajectory for this example.
	Clearly, the ES-SME algorithm's set estimate becomes unbounded, whereas our proposed algorithm's set estimate still provides a sequence of uniformly bounded ellipsoids, which validates our claim that the proposed algorithm can still provide an accurate set estimate even though $\Sigma$ is unstable.

	\section{Conclusion}
	In this paper, we have presented a novel technique to set-membership state estimation for a dynamical system with unknown-but-bounded exogenous inputs. We have analytically proved the boundedness of the set estimates and discussed some important properties. Simulation results have shown that the proposed algorithm has better performance compared with some of the existing set-membership algorithms in terms of the accuracy and numerical stability.

	\appendices
	
	\section{Proof of Proposition \ref{prop:1}} \label{app:5}
	Define the estimation error of the strongly observable subsystem $\Sigma_1$ as $e_1(t) \triangleq x_1(t)-\hat{x}_1(t)$. With (\ref{eq:8}) as well as the fact that $E=A_1-F\mathcal{O}_l$ and $F\mathcal{G}_l = \begin{bmatrix}B_1'&0\hdots&0\end{bmatrix}$, the error dynamics can be expressed as
	\begin{equation}\label{eq:9}
	\begin{aligned}
		\dot{e}_1&=Ee_1+F(z_{0:l}-\hat{z}_{0:l}).
		\end{aligned}
	\end{equation}
	Integrating and taking the norm of (\ref{eq:9}), one has
	\begin{equation} \label{eq:95}
	\begin{aligned}
		\norm{e_1(t)} &\leq \norm{e^{Et}}.\norm{e_1(0)}\\ 
		+ &\int_0^t \norm{e^{E(t-\tau)}}.\norm{F}.\norm{z_{0:l}(\tau)-\hat{z}_{0:l}(\tau)}d\tau
	\end{aligned}
	\end{equation}
	where $e_1(0) \triangleq x_1(0)-\hat{x}_1(0)$. First, we would like to bound $|\tilde{z}_{i,k}|$. Adapting the proof of Lemma 1 in \cite{c36}, we will arrive at 
	\begin{equation} \label{eq:300}
		|\tilde{z}_{i,k}| \leq \epsilon^{l-k}\delta+\left(\frac{K\sqrt{l+1}}{\epsilon^{k}}\norm{\tilde{z}_i(0)}-\epsilon^{l-k}\delta\right)e^{-\frac{at}{\epsilon}}.
	\end{equation}
	Since $\epsilon \in (0,1)$, using the knowledge on $\overline{z}_0$, (\ref{eq:300}) and Assumption \ref{ass:2}, (\ref{eq:95}) can be expressed as
	\begin{equation} \label{eq:200}
		\norm{e_1(t)} \leq \norm{e^{Et}} \norm{P_1K_0P_1^T}^{\frac{1}{2}} + \norm{F}\sqrt{n_y(l+1)} \Psi(t),
	\end{equation}
	where $\Psi(t)$ is defined in (\ref{eq:DefPsi}).
	Therefore, (\ref{eq:200}) can be expressed as $\norm{e_1(t)} \leq \epsilon_{1}(t)$
	where $\epsilon_{1}(t)$ is given in (\ref{eq:12}), and the proof is completed. 
	
	\section{Proof of Lemma \ref{lem:4}} \label{app:2}
	First, suppose that $x = \mathrm{col}(x_{q_1},x_{q_2})$. Then, let us write $(x-\hat{x}_q)^TQ^{-1}(x-\hat{x}_q)$ as
	$
		\frac{1}{g}(x_{q_1}-\hat{x}_{q_1})^TQ_1^{-1}(x_{q_1}-\hat{x}_{q_1})
		+\frac{g-1}{g}(x_{q_2}-\hat{x}_{q_2})^TQ_2^{-1}(x_{q_2}-\hat{x}_{q_2}).
	$
	Since $x_{q_1} \in \mathcal{E}(\hat{x}_{q_1},Q_1)$ and $x_{q_2} \in \mathcal{E}(\hat{x}_{q_2},Q_2)$, we will see that $x \in \mathcal{E}(\hat{x}_q,Q)$, and the first part of the proof is done. 
	To find the parameter $g$, first $\mathrm{tr}(Q)$ can be expressed as
	$
		\mathrm{tr}(Q) =g\cdot \mathrm{tr}(Q_1)+\frac{g}{g-1}\mathrm{tr}(Q_2) 
	$
	Minimizing $\mathrm{tr}(Q)$ with respect to $g$, one has
	$
		g = \sqrt{\frac{\mathrm{tr}(Q_2)}{\mathrm{tr}(Q_1)}}+1,
	$
	and the second part of the proof is completed.

	\section{Proof of Lemma \ref{lem:6}} \label{app:8}
	First, we have
	$
		K_u(t) \preceq \max(\average{\gamma}_1\average[2.5]{\epsilon}_1^2,\average{\gamma}_2\average[2.5]{w})I_{n_1+n_w} 
	$  by Lemma \ref{cor:2}.
	We claim that $\hat{P}_{2,k} \preceq \overline{p}_{2,k}I_{n_2}$, where $\overline{p}_{2,k} = \overline{f}\overline{p}_{2,k-1}+\overline{q}$. To see it, we proceed by induction. First, note that for any $\lambda_2 > \max_{\lambda(A_4)} Re(\lambda(A_4))$, there exists $\overline{a}_2 > 0$ such that $||e^{A_4\Delta t}|| \leq \overline{a}_2e^{\overline{\lambda}_2 \Delta t}, \forall \Delta t \geq 0$. When $k=1$, it is true that $\hat{P}_{2,1} \preceq \overline{f}\overline{p}_{2,0} + \overline{q}$, where $\overline{p}_{2,0} = \norm{P_1K_0P_1^T}$. Suppose that $\hat{P}_{2,k} \preceq \overline{p}_{2,k}I_{n_2}$.
	Using the upper bound on $K_u(t)$ and (\ref{eq:51}), we have
	\begin{align} \label{eq:56}
		\hat{P}_{2,k+1|k}
		&\preceq\frac{\overline{a}_2^2e^{2\overline{\lambda}_2\Delta t}\overline{p}_{2,k}}{\underline{\alpha}}I_{n_2}\nonumber\\ &\quad+ \frac{\Delta t  \max(\average{\gamma}_1\average[2.5]{\epsilon}_1^2,\average{\gamma}_2\average[2.5]{w})\overline{a}_2^2b_2^2}{2\overline{\lambda}_2(1-\overline{\alpha})}(e^{2\overline{\lambda}_2\Delta t}-1)I_{n_2}.
	\end{align}
	Based on the Woodbury matrix identity, the next step is to rewrite (\ref{eq:23}) into
	\begin{equation} \label{eq:55}
		\hat{P}_{2,k+1} = \left((1-\beta_{k+1})\hat{P}_{2,k+1|k}^{-1}+\beta_{k+1}C_2^TG_{k+1}^{-1}C_2\right)^{-1}.
	\end{equation}
	Then, using (\ref{eq:56}), we have
	\begin{align*}
		\hat{P}_{2,k+1} \preceq \frac{1}{1-\beta_{k+1}}\hat{P}_{2,k+1|k} \preceq \frac{\overline{f}\overline{p}_{2,k}+\overline{q}}{1-\overline{\beta}}I_{n_2} = \overline{p}_{2,k+1}I_{n_2}.
	\end{align*}
	As $\overline{p}_{2,k+1}$ is iteratively defined by $\frac{\overline{f}\overline{p}_{2,k}+\overline{q}}{1-\overline{\beta}}$, then $\overline{p}_{2,k} = \frac{\overline{f}\overline{p}_{2,k-1}+\overline{q}}{1-\overline{\beta}}$.  To find the lower bound $\underline{p}_2$,
	 by Lemma \ref{cor:2} and letting $\int_0^{\Delta t}e^{A_4s}e^{A_4^Ts}ds \succeq \kappa_1 I_{n_2} $ for some $\kappa_1 >0$, (\ref{eq:51}) can be written as
	\begin{equation} \label{eq:58}
			\resizebox{.95\hsize}{!}{$\hat{P}_{2,k|k-1}
			\succeq \frac{e^{A_4\Delta t}\hat{P}_{2,k-1}e^{A_4^T\Delta t}}{\alpha_k}+ \frac{\kappa_1\kappa_2^2\Delta t \min(\baverage{\gamma}_1\baverage[2.5]{\epsilon}_1^2,\baverage{\gamma}_2\baverage[2.5]{w})}{1-\underline{\alpha}}I_{n_2},$}
	\end{equation}
	where $\kappa_2 = \sigma_{min}(B_2')$.
	Next, we have
	$
		\hat{P}_{2,k|k-1}^{-1} \preceq \frac{1}{\underline{q}}I_{n_2},
	$
	and it can be shown that $0 \preceq C_2^TG_k^{-1}C_2 \preceq \frac{c_2^2}{d_2^2\min(\underline{\gamma}_1 \underline{\epsilon}_1^2,\underline{\gamma}_2\underline{w})}I_{n2}$, and (\ref{eq:55}) can be expressed as
	$	\hat{P}_{2,k} \succeq \underline{p}_2I_{n_2}$, where $\underline{p}_2$ is defined in (\ref{eq:400}),
	and the proof is completed.

	\section{Proof of Lemma \ref{lem:12}} \label{app:12}
	
	\textit{Case 1:}
	If $0 < k \leq r$, according to Lemma \ref{lem:6}, one can easily choose an upper bound $\overline{p}_2' \triangleq \max_{1 \leq k \leq r}\overline{p}_{2,k}$ such that $\hat{P}_{2,k} \preceq \overline{p}_2', \forall 0<k \leq r$.\\
	\textit{Case 2:}
	When $k>r$, adapting Lemma 2 in \cite{c57}, (\ref{eq:55}) can be expressed as
	\begin{equation} \label{eq:87}
		\hat{P}_{2,k}^{-1} \succeq (1-\overline{\beta})\varphi e^{-A_4^T\Delta t}\hat{P}_{2,k-1}^{-1}e^{-A_4\Delta t}+\underline{\beta}C_2^TG_k^{-1}C_2,
	\end{equation} where $\varphi$ is defined in (\ref{eq:405}).
	Doing recursive iteration to (\ref{eq:87}), one has
	{\small
	\begin{align}\label{eq:88}
		&\hat{P}_{2,k}^{-1} \succeq (1-\overline{\beta})^{r+1}\varphi^{r+1}e^{-(r+1)A_4^T\Delta t}\hat{P}_{2,k-r-1}^{-1}e^{-(r+1)A_4\Delta t} \nonumber\\&\quad+\sum_{i=k-r}^k \underline{\beta}((1-\overline{\beta})\varphi)^{k-i}e^{-(k-i)A_4^T\Delta t}C_2^TG_i^{-1}C_2 e^{-(k-i)A_4\Delta t}.
	\end{align}}
	Based on Assumption \ref{ass:14}, (\ref{eq:88}) can be represented as
	$
		\hat{P}_{2,k}^{-1} \succeq \underline{\beta}((1-\overline{\beta})\varphi)^r\underline{\rho}I_{n_2},
	$
	and we can arrive at (\ref{eq:405}).

	\bibliographystyle{ieeetr}

	\bibliography{biblio}

\begin{thebibliography}{10}

\bibitem{c34}
D.~Ding, Q.-L. Han, X.~Ge, and J.~Wang, ``Secure state estimation and control
  of cyber-physical systems: A survey,'' 2021.

\bibitem{c35}
Z.~Zhou, M.~Zhong, and Y.~Wang, ``Fault diagnosis observer and fault-tolerant
  control design for unmanned surface vehicles in network environments,'' {\em
  IEEE Access}, vol.~7, pp.~173694--173702, 2019.

\bibitem{c38}
B.~Chen and G.~Hu, ``Nonlinear state estimation under bounded noises,'' {\em
  Automatica}, vol.~98, pp.~159--168, 2018.

\bibitem{c21}
Y.~Liu, Y.~Zhao, and F.~Wu, ``Ellipsoidal state-bounding-based set-membership
  estimation for linear system with unknown-but-bounded disturbances,'' {\em
  IET Control Theory \& Applications}, vol.~10, no.~4, pp.~431--442, 2016.

\bibitem{c3}
B.~S. Rego, G.~V. Raffo, J.~K. Scott, and D.~M. Raimondo, ``Guaranteed methods
  based on constrained zonotopes for set-valued state estimation of nonlinear
  discrete-time systems,'' {\em Automatica}, vol.~111, p.~108614, 2020.

\bibitem{c52}
D.~Qu, Z.~Huang, Y.~Zhao, G.~Song, K.~Yi, and X.~Zhao, ``Nonlinear state
  estimation by extended parallelotope set-membership filter,'' {\em ISA
  Transactions}, 2021.

\bibitem{c5}
Z.~Wang, C.-C. Lim, and Y.~Shen, ``Interval observer design for uncertain
  discrete-time linear systems,'' {\em Systems \&\ Control Letters}, vol.~116,
  pp.~41--46, 2018.

\bibitem{c14}
{Haotian Zhang}, R.~{Ayoub}, and S.~{Sundaram}, ``State estimation for linear
  systems with unknown inputs: Unknown input norm-observers and bibobs
  stability,'' in {\em 2015 American Control Conference (ACC)}, pp.~4186--4191,
  2015.

\bibitem{c31}
S.~Sundaram, ``Fault-tolerant and secure control systems,'' {\em University of
  Waterloo, Lecture Notes}, 2012.

\bibitem{c11}
F.~{Xu}, S.~{Yang}, and X.~{Wang}, ``A novel set-theoretic interval observer
  for discrete linear time-invariant systems,'' {\em IEEE Transactions on
  Automatic Control}, pp.~1--1, 2020.

\bibitem{c53}
D.~Efimov, L.~Fridman, T.~Raïssi, A.~Zolghadri, and R.~Seydou, ``Interval
  estimation for lpv systems applying high order sliding mode techniques,''
  {\em Automatica}, vol.~48, no.~9, pp.~2365--2371, 2012.

\bibitem{c19}
E.~Scholte and M.~E. Campbell, ``A nonlinear set-membership filter for on-line
  applications,'' {\em International Journal of Robust and Nonlinear Control},
  vol.~13, no.~15, pp.~1337--1358, 2003.

\bibitem{c20}
Y.~Becis-Aubry, M.~Boutayeb, and M.~Darouach, ``State estimation in the
  presence of bounded disturbances,'' {\em Automatica}, vol.~44, no.~7,
  pp.~1867--1873, 2008.

\bibitem{c15}
F.~J. Bejarano, L.~Fridman, and A.~Poznyak, ``Unknown input and state
  estimation for unobservable systems,'' {\em SIAM Journal on Control and
  Optimization}, vol.~48, no.~2, pp.~1155--1178, 2009.

\bibitem{c16}
B.~{Molinari}, ``A strong controllability and observability in linear
  multivariable control,'' {\em IEEE Transactions on Automatic Control},
  vol.~21, no.~5, pp.~761--764, 1976.

\bibitem{c12}
S.~{Sundaram} and C.~N. {Hadjicostis}, ``Delayed observers for linear systems
  with unknown inputs,'' {\em IEEE Transactions on Automatic Control}, vol.~52,
  no.~2, pp.~334--339, 2007.

\bibitem{c30}
H.~K. Khalil and L.~Praly, ``High-gain observers in nonlinear feedback
  control,'' {\em International Journal of Robust and Nonlinear Control},
  vol.~24, no.~6, pp.~993--1015, 2014.

\bibitem{c32}
Y.~Liu, Y.~Zhao, and F.~Wu, ``Ellipsoidal state-bounding-based set-membership
  estimation for linear system with unknown-but-bounded disturbances,'' {\em
  IET Control Theory \& Applications}, vol.~10, no.~4, pp.~431--442, 2016.

\bibitem{c9}
C.~Durieu, E.~Walter, and B.~Polyak, ``Multi-input multi-output ellipsoidal
  state bounding,'' {\em J. Optim. Theory Appl.}, vol.~111, p.~273–303, Nov.
  2001.

\bibitem{c59}
M.~Abate and S.~Coogan, ``Computing robustly forward invariant sets for
  mixed-monotone systems,'' in {\em 2020 59th IEEE Conference on Decision and
  Control (CDC)}, pp.~4553--4559, IEEE, 2020.

\bibitem{c57}
W.~Li, G.~Wei, D.~Ding, Y.~Liu, and F.~E. Alsaadi, ``A new look at boundedness
  of error covariance of kalman filtering,'' {\em IEEE Transactions on Systems,
  Man, and Cybernetics: Systems}, vol.~48, no.~2, pp.~309--314, 2018.

\bibitem{c47}
C.~Edwards and S.~Spurgeon, {\em Sliding Mode Control: Theory And
  Applications}.
\newblock Series in Systems and Control, CRC Press, 1998.

\bibitem{c48}
B.~Zhou, J.~Han, and G.~Liu, ``A ud factorization-based nonlinear adaptive
  set-membership filter for ellipsoidal estimation,'' {\em International
  Journal of Robust and Nonlinear Control: IFAC-Affiliated Journal}, vol.~18,
  no.~16, pp.~1513--1531, 2008.

\bibitem{c36}
L.~K. Vasiljevic and H.~K. Khalil, ``Differentiation with high-gain observers
  the presence of measurement noise,'' in {\em Proceedings of the 45th IEEE
  Conference on Decision and Control}, pp.~4717--4722, 2006.

\end{thebibliography}
	
\end{document}